\DeclareRobustCommand{\loplus}{\mathbin{\mathpalette\dog@lsemi{+}}}
\DeclareRobustCommand{\lotimes}{\mathbin{\mathpalette\dog@lsemi{\times}}}
\DeclareRobustCommand{\roplus}{\mathbin{\mathpalette\dog@rsemi{+}}}
\DeclareRobustCommand{\rotimes}{\mathbin{\mathpalette\dog@rsemi{\times}}}
\newcommand{\dog@rsemi}[2]{\dog@semi{#1}{#2}{-90,90}}
\newcommand{\dog@lsemi}[2]{\dog@semi{#1}{#2}{270,90}}
\newcommand{\dog@semi}[3]{%
  \begingroup
  \sbox\z@{$\m@th#1#2$}%
  \setlength{\unitlength}{\dimexpr\ht\z@+\dp\z@\relax}%
  \makebox[\wd\z@]{\raisebox{-\dp\z@}{%
    \begin{picture}(1,1)
    \linethickness{\variable@rule{#1}}
    \roundcap
    \put(0.5,0.5){\makebox(0,0){\raisebox{\dp\z@}{$\m@th#1#2$}}}
    \put(0.5,0.5){\arc[#3]{0.5}}
    \end{picture}%
  }}%
  \endgroup
}
\newcommand{\variable@rule}[1]{%
  \fontdimen8  
  \ifx#1\displaystyle\textfont3\else
    \ifx#1\textstyle\textfont3\else
      \ifx#1\scriptstyle\scriptfont3\else
        \scriptscriptfont3\relax
  \fi\fi\fi
}
\patchcmd{\ttlh@hang}{\parindent\z@}{\parindent\z@\leavevmode}{}{}
\patchcmd{\ttlh@hang}{\noindent}{}{}{}
\newcommand{\secmark}{}
\newcommand{\marktotoc}[1]{\renewcommand{\secmark}{#1}}
\renewcommand{\@dotsep}{1000}
\titleformat*{\section}{\center\large\bfseries}
\titleformat*{\subsection}{\center\bfseries}
\definecolor{airforceblue}{rgb}{0.36, 0.54, 0.66}
\definecolor{antiquefuchsia}{rgb}{0.57, 0.36, 0.51}
\definecolor{blush}{rgb}{0.87, 0.36, 0.51}
\definecolor{bondiblue}{rgb}{0.0, 0.58, 0.71}
\definecolor{MyGreen}{rgb}{0.0,0.5,0}
\definecolor{MyDarkRed}{rgb}{0.7,0,0}
\definecolor{MyBlue}{rgb}{0.0,0.0,.8}
\definecolor{Green}{rgb}{0.4,.8,0}
\newtheorem{theorem}{Theorem}
\def\be#1\ee{\begin{align}#1\end{align}}
\def\bsub#1\esub{\begin{subequations}#1\end{subequations}}
\def\bg#1\eg{\begin{gather}#1\end{gather}}
\def\ba{\begin{eqnarray}}
\def\ea{\end{eqnarray}}
\def\q{\qquad}
\def\f{\frac}
\def\df{\dfrac}
\def\dd{\text{d}}
\def\eps{\varepsilon}
\def\lb{\big\lbrace}
\def\rb{\big\rbrace}
\def\rm#1{\mathrm{#1}}
\def\de{\mathrm{d}}
\def\pe{\phantom{=\ }}
\newcommand{\R}{{\mathbb R}}
\newcommand{\Z}{{\mathbb Z}}
\newcommand{\cA}{{\mathcal A}}
\newcommand{\cD}{{\mathcal D}}
\newcommand{\cH}{{\mathcal H}}
\newcommand{\cL}{{\mathcal L}}
\newcommand{\cO}{{\mathcal O}}
\newcommand{\cP}{{\mathcal P}}
\newcommand{\cQ}{{\mathcal Q}}
\newcommand{\cR}{{\mathcal R}}
\newcommand{\cS}{{\mathcal S}}
\newcommand{\cT}{{\mathcal T}}
\newcommand{\cV}{{\mathcal V}}
\newcommand{\SL}{\mathrm{SL}}
\renewcommand{\sl}{{\mathfrak{sl}}}
\newcommand{\so}{{\mathfrak{so}}}
\newcommand{\bms}{{\mathfrak{bms}}}
\newcommand{\iso}{{\mathfrak{iso}}}
\renewcommand{\c}{{\mathfrak c}}
\newcommand{\h}{{\mathfrak h}}
\renewcommand{\theequation}{\thesection.\arabic{equation}}\numberwithin{equation}{section}
\begin{document}

\title{\Large{\textbf{\sffamily Dynamical symmetries of homogeneous minisuperspace models}}}
\author{\sffamily Marc Geiller, Etera R. Livine, Francesco Sartini}
\date{\small{\textit{ENS de Lyon, CNRS, Laboratoire de Physique, F-69342 Lyon, France}}}

\maketitle

\begin{abstract}

We investigate the phase space symmetries and conserved charges of homogeneous gravitational minisuperspaces. These (0\,+\,1)-dimensional reductions of general relativity are defined by spacetime metrics in which the dynamical variables depend only on a time coordinate, and are formulated as mechanical systems with a non-trivial field space metric (or supermetric) and effective potential. We show how to extract conserved charges for those minisuperspaces from the homothetic Killing vectors of the field space metric. In the case of two-dimensional field spaces, we exhibit a universal 8-dimensional symmetry algebra $\mathcal{A}=\big(\mathfrak{sl}(2,\mathbb{R})\oplus\mathbb{R}\big)\loplus\mathfrak{h}_2$, based on the two-dimensional Heisenberg algebra $\mathfrak{h}_2\simeq\mathbb{R}^4$. We apply this to the systematic study of the Bianchi models for homogeneous cosmology. This extends previous results on the $\mathfrak{sl}(2,\mathbb{R})$ algebra for  Friedmann-Lemaitre-Robertson-Walker  cosmology, and the Poincar\'e symmetry for Kantowski--Sachs metrics describing the black hole interior. The presence of this rich symmetry structure already in minisuperspace models opens new doors towards quantization and the study of solution generating mechanisms.

\end{abstract}

\thispagestyle{empty}
\newpage
\setcounter{page}{1}

\hrule
\vspace{-0.3cm}
\tableofcontents
\addtocontents{toc}{\protect\setcounter{tocdepth}{3}} 
\vspace{0.5cm}
\hrule

\newpage

\section{Introduction}

One of the most useful and conceptually profound results in modern physics comes from the work of Noether  \cite{Noether:1918zz,leone2018wonderfulness}. It relates symmetries, which are maps between classical solutions, to conservation laws, which are constants of the motion along each classical solution. In gauge theories, and in particular general relativity, Noether's two theorems together imply that symmetry transformations (which can be exact isometries or more generally asymptotic symmetries) give rise to codimension-2 surface charges (see \cite{Iyer:1994ys,Barnich:2007bf,Avery:2015rga,Compere:2018aar,DeHaro:2021gdv} for modern references). These encode important physical information about e.g. the mass (or energy), angular momentum, and possible radiation for certain classes of spacetimes. The existence in gauge theories of such non-vanishing charges signals the fact that boundaries turn otherwise gauge symmetries into physical symmetries. The charges generating these physical symmetries endow moreover the boundary with a non-trivial algebraic structure, which in gravity has been studied extensively since the pioneering work of Bondi, van der Burg, Metzner, and Sachs \cite{Bondi:1960jsa,Bondi:1962px,Sachs:1962zza,Sachs:1962wk,Barnich:2010eb,Flanagan:2015pxa,Madler:2016xju,Compere:2020lrt,Ruzziconi:2020cjt,Freidel:2021yqe,Fiorucci:2021pha}. In order to understand this fine structure of general relativity many authors have also turned to the study of symmetry charges and algebras in lower-dimensional gravity (see e.g. \cite{Cadoni:1999ja,Navarro-Salas:1999zer,Grumiller:2002nm,Maldacena:2016upp,Blommaert:2018iqz,Afshar:2019axx,Godet:2020xpk,ruzziconi2020conservation,Adami:2020ugu,Mertens:2020hbs} and \cite{Brown:1986nw,Ashtekar:1996cd,Barnich:2006av,Barnich:2012aw,Grumiller:2016pqb,Grumiller:2017sjh,Afshar:2016uax,Alessio:2020ioh,Geiller:2020okp,Geiller:2021vpg} in two and three spacetime dimensions respectively).

Going lower in dimension, one finds (0\,+\,1)-dimensional models, in which case field theories reduces to mechanical systems where the dynamical variables evolve only in time. In general relativity, such models are called minisuperspaces and arise from the reduction to a finite number of homogeneous degrees of freedom \cite{Ryan:1975jw}. This includes for example FLRW cosmologies, Bianchi models, and also the so-called Kantowski--Sachs cosmological models \cite{Kantowski:1966te} describing in particular the black hole interior (where due to the exchange of the radial and temporal coordinates the metric does indeed become homogeneous). This article is devoted to the study of the symmetry properties of such minisuperspace models. As expected, they inherit from full general relativity a remnant of diffeomorphism-invariance, which is the freedom in performing redefinitions of the time coordinate. The generator of these reparametrizations is the Hamiltonian itself. Our goal is to explain and show that in most minisuperspace models the Hamiltonian can in fact be embedded in larger symmetry algebras, which moreover bear similarities with the boundary symmetry algebras arising in four-dimensional general relativity. Although the minisuperspace models are not gauge field theories and have no spatial boundaries, the symmetries which we unravel act as physical symmetries (changing e.g. the mass of a black hole), and also interplay with the spatial cut-offs required in order to meaningfully define the homogeneous reduction of the Einstein--Hilbert action (and which are therefore remnants of  information about the boundary) \cite{francesco-thesis}.

The systematic study of such symmetries of minisuperspace models was initiated in \cite{BenAchour:2017qpb,BenAchour:2019ywl,BenAchour:2018jwq,BenAchour:2019ufa,BenAchour:2020ewm,BenAchour:2020njq} (see also  \cite{Pioline:2002qz}), where it was shown in the context of FLRW cosmology that the Hamiltonian belongs to an $\sl(2,\R)$ algebra called CVH. This latter is spanned by the Hamiltonian $H$ together with the volume $V$ and the generator $C$ of isotropic dilatations on phase space. This was extended in \cite{Achour:2021lqq} to include spatial curvature and a cosmological constant. In \cite{Geiller:2020xze}, the present authors have studied the Kantowski--Sachs model describing the black hole interior, and found that there the CVH algebra gets extended to an $\iso(2,1)$ algebra of conserved charges encoding the dynamics of the system in an algebraic manner. This was then extended to Schwarzschild (A)dS metrics in \cite{Achour:2021dtj}. This body of results led to a completely symmetry-based quantization of the black hole interior model in  \cite{Sartini:2021ktb}. In \cite{BenAchour:2020xif,Geiller:2021jmg} (see also \cite{Pailas:2020xhh} for related results), the infinite-dimensional enhancement of $\sl(2,\R)$ and $\iso(2,1)$ to the Virasoro and $\bms_3$ algebras respectively has been studied, and it was shown that these extended transformations (which are of course motivated from three and four-dimensional gravity) enable to generate e.g. a cosmological constant term in the minisuperspace models. It was also shown that the (0\,+\,1)-dimensional minisuperspace actions can be written as geometric actions for the enhanced symmetry groups.

Although minisuperspace models seem to be very simple, they contain a rich and physically-useful symmetry structure whose origin is not yet understood in a systematic manner. This is what we aim at improving here. At first sight, one could think of approaching this problem using the Hamiltonian formulation, where roughly speaking the requirement of homogeneity amounts to solving the vector constraint. We are then left with the scalar constraint only, which furthermore splits as $\cH=\cH_\text{kin}+\cH_\text{pot}$ into a kinetic term and a potential term. The kinetic term depends on the ADM momenta while the potential only depends on the Ricci scalar of the three-dimensional slice (and the spatial volume in the presence of a cosmological constant). As explained in appendix \ref{ADM}, the kinetic term $\cH_\text{kin}$ always forms an $\sl(2,\R)$ algebra together with the volume (which is the integral on the slice of $\sqrt{q}$) and the smeared extrinsic curvature. This is however not sufficient in order to obtain symmetries of the model since (as we will recall below) the construction of the conserved charges (which generate the said symmetries) requires the \textit{full} Hamiltonian to be part of the algebra, and not only its kinetic part. When the minisuperspace model is such that the three-dimensional Ricci curvature vanishes, the Hamiltonian has only a kinetic term and we are therefore guaranteed that the CVH algebra exists. This is what happens for example in flat FLRW cosmology. In more general models, where a non-trivial potential is present, we are therefore left with the question of how to recover (if possible) a CVH-type algebra. It turns out that studying this problem frontally by computing Poisson brackets involving the three-dimensional Ricci scalar is too cumbersome. An alternative route, which is the one we set out to study here, is to use the superspace formulation.

The authors \cite{Christodoulakis:2013sya,Christodoulakis:2013xha,Christodoulakis:2012eg,Dimakis:2015rba,Terzis:2015mua,Dimakis:2016mpg,Christodoulakis:2018swq} have also studied the phase space symmetries of homogeneous cosmological models (without explicit reference to an $\sl(2,\R)$ CVH algebra however), but using a field space formulation. This formulation is based on the following simple observation: When evaluating the Einstein--Hilbert action on a minisuperspace line element, the action reduces to a one-dimensional mechanical model of the form
\be
\cS=\int\de t\,\left(\f{1}{2}g_{\mu\nu}\dot{q}^\mu\dot{q}^\nu-U(q)\right).
\ee
For simplicity and illustrative purposes, we have set here the lapse inherited from the line element to $N=1$. Its choice (in a field-dependent manner) will however be fundamental for the construction, as we explain in details in the rest of the article. Here $g_{\mu\nu}$ is \textit{not} the spacetime metric, but rather the metric on field space, or superspace. It controls the form of the kinetic term which depends on the coordinates $q^\mu$ on field space. $N$ is the lapse inherited from the line element, and $U$ is the potential inherited from the spatial integral of the three-dimensional Ricci scalar (i.e. it is essentially the potential $\cH_\text{pot}$ appearing in the Hamiltonian). As noticed in \cite{Christodoulakis:2013sya,Christodoulakis:2013xha,Christodoulakis:2012eg,Dimakis:2015rba,Terzis:2015mua,Dimakis:2016mpg,Christodoulakis:2018swq}, the symmetry properties of the minisuperspace model can be inferred from the symmetries of the supermetric and the potential. To explain how this comes about, suppose we identify a vector field such that $\pounds_\xi g_{\mu\nu}=\lambda g_{\mu\nu}$ and $\pounds_\xi U=-\lambda U$ with $\lambda=\text{constant}$. Then the quantity $C=\xi^\mu p_\mu$ obtained by contracting the vector field with the momentum is such that its Poisson bracket with the Hamiltonian gives $\lb C,H\rb=\lambda H$. From this one can immediately conclude that $\cQ=C-t\lambda H$ is a conserved quantity, which therefore generated a symmetry of the theory. This is in essence the relationship between the field space geometry $g_{\mu\nu}$ and the symmetries of the minisuperspace phase space. In the CVH algebra, the volume $V$ gives rise to a conserved quantity which is quadratic in time, at the difference with $C$ which is linear. Our goal is to generalize the construction of \cite{Christodoulakis:2013sya,Christodoulakis:2013xha,Christodoulakis:2012eg,Dimakis:2015rba,Terzis:2015mua,Dimakis:2016mpg,Christodoulakis:2018swq} outlined above and explain how such quadratic charges (and therefore a CVH algebra) can be constructed. As we are going to see in details, this depends heavily on the choice of the lapse and on the potential $U$, which is what makes the analysis involved.

In summary, we show that the above construction based on conformal Killing symmetries of the supermetric $g_{\mu\nu}$ can, for some minisuperspace models, identify many more conserved charges than that forming the CVH algebra. For FLRW cosmology with a scalar field, Kantowski--Sachs cosmologies and the Bianchi models (III, V, VI), we exhibit an 8-dimensional symmetry algebra $\cA=\big(\sl(2,\R) \oplus\R\big)\loplus\mathfrak{h}_2$, where $\mathfrak{h}_2\simeq\R^4$ is the two-dimensional Heisenberg algebra. For the Bianchi models (VIII, IX), only the $\sl(2,\R)$ CVH subalgebra survives; while, oddly enough, for the Bianchi models (IV, VII), we find that the construction fails and does not lead to an algebra. These are all models with a two-dimensional field space (spanned essentially by two scale factors, or one scale factor and the scalar field in the case of FLRW with matter). We conclude the paper with a study of three-dimensional field spaces, which arise in the Bianchi I and II models. In the case of Bianchi I, we find an algebra which has dimension 30, while for Bianchi II the algebra is 4-dimensional. In both cases, the $\sl(2,\R)$ CVH algebra appears as a subalgebra. The reason for which some of these algebras are much bigger than the dimensionality of the phase space is that the generators satisfy dependency relations. In the case of the algebra $\cA$ these are analogous to Sugawara constructions starting from four building blocks which are the generators of the two-dimensional Heisenberg algebra $\h_2$. The choice of a preferred subalgebra on which to base the group quantization will therefore depend on which physical variable we wish to represent in the quantum theory \cite{francesco-thesis,Sartini:2021ktb}. It is therefore desirable to obtain these enlargements of the $\sl(2,\R)$ symmetry algebra, even if classically these may seem a redundant description of the dynamics and the symmetries. We note that this article is devoted to the presentation of the technical construction of these algebras, and to their classification in the case of FLRW, Kantowski--Sachs, and Bianchi models. Physical applications (e.g. to quantization and solution generating transformations) will be presented elsewhere.

This article is organized as follows.
We start in section \ref{sec:generalities} by discussing in general about  homogeneous reductions of general relativity. This shows how minisuperspace models are formulated as mechanical Lagrangians of the form \eqref{mini_lagrang} with a given field space geometry. We then explain the relationship between  conformal Killing vectors of this field space geometry and conserved quantities.
We clarify the role of the potential and of the lapse in this procedure. Section \ref{sec:2d} is then devoted to the study of models which give rise to a two-dimensional field space. We explain how the algebra $\cA$ arises from the geometry of two-dimensional Minskowski field space, and how the presence of a non-vanishing potential may actually restrict this algebra to a subalgebra. We then apply these criteria to the Bianchi models and to the black hole interior. Finally, section \ref{sec:Bianchi I II} presents the example of three-dimensional field spaces, which arise in the case of the Bianchi I and II models (which have three scale factors). Appendix \ref{Bianchi_metrics} presents a useful summary of the properties of the various Bianchi models, listing their line elements, potentials, and fiducial volumes.

\section{Minisuperspaces}
\label{sec:generalities}

In this section, we present the formalism used to describe homogeneous minisuperspace models in general relativity and to study their symmetries. We start by reviewing the homogeneous reduction of gravity, and introduce a triad formulation which allows to separate the homogeneous dynamical fields from the background geometric structure of the various minisuperspace models. This enables us to identify an internal metric on field space (or superspace), whose geometry controls the dynamics of the homogeneous symmetry-reduced action. We then explain how the symmetries of this field space geometry can be translated into algebraic and symmetry structures on the phase space of the minisuperspace models.

\subsection{Homogeneous reduction of gravity}
\label{sec:homogeneous}

In this work, we are interested in the study of homogeneous minisuperspace models, for which the Einstein--Hilbert action reduces to a mechanical action integrated over a single time variable. In order to set the stage, let us consider a four-dimensional spacetime equipped with a metric of the general form
\be\label{minisuperspace}
\de s_{4\rm{D}}^2 = -N(t)^2 \de t^2 + q_{\alpha\beta}(x,t) \de x^\alpha \de x^\beta\,, \q\q
q_{\alpha\beta}(x,t)= e_\alpha^i(x)\, e_\beta^j(x)\, \gamma_{ij}(t)\,.
\ee
In this parametrization the spatial triad $e^i_\alpha$ does not depend on the time coordinate, but only on the position on the spatial slice. All the dynamical information is therefore captured in the spatially homogeneous but \textit{time-dependent} fields $\gamma_{ij}$. This parametrization is a convenient way to keep track of the fact that in e.g. Bianchi models (see appendix \ref{Bianchi_metrics}) the metric can depend explicitly on the spatial coordinates but the dynamical fields which evolve in time are homogeneous. In the full ADM field theory the temporal and spatial dependencies cannot be disentangled and are captured by a single field $q_{\alpha\beta}(t,x)$, which is the metric on a slice. Here, however, because of homogeneity the gravitational action will reduce to a mechanical model with only evolution in time.

In minisuperspace models, a leftover of the ADM formulation is encoded in the fact that the Hamiltonian of the mechanical model is the scalar constraint of general relativity. Because the line element \eqref{minisuperspace} does not contain shift terms, i.e. space/time cross  terms proportional to $\de t\,\de x^\alpha$, we need however to ensure for consistency that the ADM vector constraints are satisfied (which amounts to requiring that the projection of the Einstein equations on the spatial slice are satisfied). To this purpose, let us first compute the extrinsic curvature of the three-dimensional slice,
\be
K_{\alpha\beta} = \f{1}{2N} \dot q_{\alpha\beta} = \f{1}{2N} e^i_\alpha e^j_\beta\, \dot{\gamma}_{ij}\,,\q\q  K\coloneqq q^{\alpha\beta} K_{\alpha\beta} = \f{1}{2N} \gamma^{ij} \,\dot \gamma_{ij}\,,
\ee
where $\gamma^{ij}$ is the inverse of the internal metric, $\gamma^{ij}\gamma_{jk}=\delta^{i}{}_{k}$.

Writing $D_\alpha$ for the three-dimensional covariant derivative\footnote{Using the convention $D_\alpha$ for the 3d covariant derivative, we keep the notation $\nabla_\mu$ for the covariant derivative with respect to the  metric $g_{\mu\nu}$ on field space.} with respect to the hypersurface metric $q_{\alpha\beta}$, imposing  the vector constraint  amounts  to the requirement that
\be
D_\alpha \big(K q^{\alpha\beta}- K^{\alpha\beta}\big) =\f{1}{2N}\Big(\big(\gamma^{k\ell} \dot \gamma_{k\ell}\big)\gamma^{ij} + \dot \gamma^{ij}\Big) D_\alpha \big ( e^\alpha_{i} \, e^\beta_j\big )\stackrel{!}{=}0\,. \label{vector_constr}
\ee
The internal metric $\gamma_{ij}$ must be chosen such that this equation is satisfied (if this is not already the case given the form of $e^\alpha_i$). As we will see shortly, it turns out that in the majority of the cases relevant for general relativity the internal metric only depends on two dynamical fields, which will be denoted by $a(t)$ and $b(t)$.

When the constraint \eqref{vector_constr} is satisfied, the dynamical evolution is described by the scalar constraint alone, and the evolution equations derived from the symmetry-reduced homogeneous action coincide with the homogeneous reduction of the full Einstein field equations. The Einstein--Hilbert action for the line element \eqref{minisuperspace} reduces to
\be\label{Einstein_mini}
&\phantom{=\ \;}\f{1}{16 \pi G}\int_{t_\rm f}^{t_\rm i} \int_\Sigma\de^4 x\,\sqrt{-g}\,\big(\cR-2\Lambda\big)\cr
&=\f{1}{16 \pi G} \int_{t_\rm f}^{t_\rm i} \int_\Sigma \de^3 x \,N\sqrt{q}\left (K^2-K_{\alpha\beta}K^{\alpha\beta} + R^{(3)} - 2 \Lambda \right )  +\f{1}{8\pi G} \left .\int_\Sigma \de^3 x\, \sqrt{q}\, K\right |_{t_\rm{i}}^{t_\rm{f}} \notag \\
&= \f{\cV_0}{G} \int_{t_\rm f}^{t_\rm i}\de t\,\sqrt{\gamma}\left[\f{1}{4N}\Big((\gamma^{ij} \dot \gamma_{ij})^2+ \dot \gamma_{ij} \dot \gamma^{ij}\Big) - 2 N \Lambda \right ] +\f{1}{16 \pi G} \int_{t_\rm f}^{t_\rm i} \int_\Sigma \de^3 x \,N\sqrt{q}\,R^{(3)}  +\cS_{\rm{GHY}}\,.\q
\ee
Here we have introduced the fiducial volume
\be
\cV_0 \coloneqq\f{1}{16\pi} \int_\Sigma \de^3 x \,|e|\,,
\ee
where $e= \det(e^i_\alpha)$, which we want to be finite. Depending on the topology of $\Sigma$ this may require the introduction of cut-offs (i.e. further fiducial quantities) on the spatial integrations, as done in appendix \ref{Bianchi_metrics} for the various models of interest. Finally, the Gibbons--Hawking--York (GHY) term is a total derivative and does not play any role in the evolution of the classical system.

This symmetry-reduced action \eqref{Einstein_mini} should be seen as the action for a mechanical model, describing the evolution in time of the degrees of freedom $\gamma_{ij}(t)$. The kinetic term comes from the extrinsic curvature contribution to the four-dimensional curvature in the Einstein--Hilbert Lagrangian, while the three-dimensional Ricci scalar $R^{(3)}$ plays the role of the potential. Unfortunately there is no general formula at this stage for splitting  $R^{(3)}$ into a time-dependent dynamical part and a purely spatial frame-dependent piece (see appendix \ref{triad_decomp} for the explicit computation of the 3d curvature), and its expression will be evaluated on a case by case basis. The equation of motion obtained from the variation of this action with respect to the lapse imposes, as expected, the scalar constraint. 

Let us now illustrate this general construction with the example of the Bianchi III line element (we give all the other minisuperspace examples in section \ref{application})
\be\label{exmpl:metric}
\de s^2_{\rm{III}} &= -N(t)^2 \de t^2 + a(t)^2 \de x^2 + b(t)^2 L_s^2 \left (\de y^2+\sinh^2 y\, \de \phi^2\right )\,.
\ee
For this line element the triad and the internal time-dependent metric are given by
\be
e^1_\alpha\de x^\alpha = \de x\,,\q e^2_\alpha\de x^\alpha = L_s\, \de y \,,\q e^3_\alpha\de x^\alpha = L_s\, \sinh y\, \de \phi \,,\q\gamma_{ij} =\rm{diag} \left (a^2, b^2,b^2\right )\,.
\ee
One can easily check that the condition \eqref{vector_constr} is indeed satisfied. The length scale $L_s$ has been introduced in order to have dimensionless fields $a$ and $b$. Since the spatial slice has the topology $\R^2 \times S^1$, we need to introduce two fiducial scales in order to have a finite spatial volume. Introducing cut-offs\footnote{Note that there are several fiducial scales entering the equations. $L_s$ is used to ensure that the dynamical fields are dimensionless, while $L_x,L_y,\dots$ are dimensionful cut-offs (fiducial lengths) on the variables $x,y,\dots$, while $x_0,y_0,\dots$ are dimensionless cut-offs.} in the two non-compact $x$ and $y$ directions we get
\be
\cV_0 = \f{1}{16 \pi} \int_0^{L_x} \de x \int_0^{y_0}\de y \int_0^{2\pi} \de \phi\, L_s^2|\sinh y| = \f{1}{4}L_x L_s^2\sinh^2 \left (\f{y_0}{2}\right)\,.
\ee
Up to the total time derivative, the action \eqref{Einstein_mini} evaluated for \eqref{exmpl:metric} then gives
\be\label{action example}
\cS=-\f{\cV_0}{G}\int \de t \left [\f{1}{N}\big(4 b\, \dot a \dot b +2 a\, \dot b^2\big)+2N\left (\f{a}{L_s^2}+\Lambda a b^2\right )\right ]\,.
\ee
This is indeed a mechanical system encoding the evolution in time of the two degrees of freedom $a$ and $b$, which are the dynamical components of the minisuperspace line element.
Our goal is to now describe the phase space symmetries of such minisuperspace models, that is identify the (possibly) time-dependent constants of motion, to be interpreted as Noether charges generating the symmetries of the system. For this, we turn to a field space formulation.

\subsection{Field space formulation}

The mechanical actions of the form \eqref{action example}, obtained from a homogeneous reduction of general relativity, always describe a particle moving in an auxiliary curved spacetime and subject to a potential. The coordinates on this spacetime are the independent fields which appear in the internal metric $\gamma_{ij}$. This dynamics is described by mechanical Lagrangians of the form
\be\label{mini_lagrang}
\cL = \f{1}{2N} \tilde g_{\mu\nu} \dot{q}^\mu \dot{q}^\nu - N \tilde{U}(q)\,.
\ee
The tensor $\tilde g_{\mu\nu}(q)$ is the invertible metric on the field space, on which the $q$'s are coordinates. We will denote by $\de \tilde s^2_\rm{mini}$ the corresponding line element. These quantities should not be confused with the original spacetime metric and coordinates\footnote{We still use Greek letters to denote the coordinates on the field space. In summary, we have three kinds of indices:
\begin{itemize}
\item The coordinates and the metric on the hypersurface are $x^\alpha$ and $q_{\alpha\beta}$, where $\alpha,\beta,\gamma,\dots\,\!=1,\dots,3$.
\item The internal indices for the internal metric and the frame are $i,j,k,\dots\,\!=1,\dots,3$. We also use Latin letters to label the Killing vectors below, but place them between parentheses: $(i)$, $(j)$, $\dots$.
\item The coordinates and the metric on field space are $q^\mu$ and $g_{\mu\nu}$, but the range can vary on a model-dependent basis.
\end{itemize}}.
In particular, note that the dimension of the field space metric can vary on a model-dependent basis even when the dimension of space-time is fixed.

In what follows it will sometimes be useful to perform redefinitions of the fields entering the minisuperspace model \eqref{minisuperspace}. This induces a redefinition of the coordinates $q^\mu$ in the field space and in turn a redefinition of the field space metric. It will for example be convenient to write the flat field space metrics in explicit Minkowski form. These changes of coordinates simply exploit the freedom in changing variables in the minisuperspace model. For example, in FLRW cosmology one can work with the scale factor $a$ or with the volume $a^3$. The range of the coordinates in the field space formulation simply follows the range of the dynamical variables in the minisuperspace model. The equivalence between the equations of motion derived from the homogeneous Lagrangian \eqref{mini_lagrang} and the homogeneous reduction of Einstein field equations, ensures the consistency of the field space approach. The presence of singularities or degenerate metrics occurs then in the same way as in the standard general relativity formulation.

In the example of the Bianchi III model described above, the Lagrangian corresponding to the reduced action \eqref{action example} is indeed of the form \eqref{mini_lagrang} with $q^\mu=(q^1,q^2)=(a,b)$ and
\be
\de \tilde s^2_{\rm{mini}} = -8 b\, \de a\, \de b - 4 a\, \de b^2\,, 
\q\q
\tilde{g}_{\mu\nu}=
-4\begin{pmatrix}
0&b\\
b&a
\end{pmatrix}
\,,
\q\q
\tilde{U} = 2 \left (\f{a}{L_s^2}+\Lambda a b^2\right )\,.
\ee
The lapse function $N$ is a remnant of the diffeomorphism invariance inherited from the full theory. It plays the role of a Lagrange multiplier enforcing the Hamiltonian constraint, and it ensures that the model is invariant under time reparametrizations. The Lagrangian \eqref{mini_lagrang} is indeed invariant under the symmetry
\be
\delta_f q^\mu = f \dot{q}^\mu\,,\q\q \delta_f N = f\dot{N}+ \dot{f} N\,, 
\ee
for a generic function $f$ which can also be field-dependent. Performing a Legendre transform of $\eqref{mini_lagrang}$ we obtain the Hamiltonian on the phase space $\{q^\mu,p_\nu\} =\delta^\mu_\nu$, which is
\be
\cL =p_\mu\dot{q}^\mu -N \cH\,,\q\q p_\mu=\f{1}{N}\tilde{g}_{\mu\nu}\dot{q}^\nu \,,\q\q H\coloneqq N\cH = \f{1}{2}N \tilde{g}^{\mu \nu} p_\mu p_\nu + N\tilde{U}(q)\,,
\label{curved_particle_model}
\ee
with $\cH\approx 0$ on-shell. At the classical level any choice of lapse leads to an equivalent description of the dynamics. It is indeed common to simply view a choice of lapse, which can also be field-dependent, as a choice of clock. For example, working with the proper time $\tau$, defined by $\de\tau=N \de t$, is equivalent to setting $N=1$.

As we are about to see, in order to study the symmetries it will however be important to choose the lapse as a field-dependent function $N(q)$, and to keep track of its interplay with the field space metric and the potential. For this reason, we will stick to the time variable $t$ and denote $\de/\de t$ by a dot, and view the lapse as a conformal factor in front of the field space metric. We therefore introduce a rescaled metric such that
\be
g_{\mu\nu} \coloneqq \f{1}{N} \tilde g_{\mu\nu}\,,\q\q p^\mu =g^{\mu\nu} p_\nu = \dot q^\mu\,, \q\q \xi_\mu = g_{\mu\nu} \xi^\mu\,,\q\q
\partial_\mu\coloneqq\f{\partial}{\partial q^\mu}\,,
\ee
for any vector on the field space $\xi^\mu$. The evolution of a phase space observable $\cO$ with respect to $t$ is given by
\be
\dot \cO\coloneqq\f{\de \cO}{\de t} = \partial_t \cO + \lb \cO, H\rb\,,\q\q H\coloneqq N \cH =\f{1}{2}p^\mu p_\mu +N\tilde{U}\,.  
\ee
Below we will interpret a change of lapse as a change of field space metric instead of a change of time (although the two pictures are equivalent at the end of the day). We would like to remark that the phase space is actually independent on the choice of lapse, and hereafter we will think it as a function on phase space, instead of a free field.

When working with the rescaled metric $g_{\mu\nu}$, the equation of motion obtained by varying \eqref{mini_lagrang} with respect to $q^\mu$ is
\be
\ddot{q}^\mu+\Gamma^\mu_{\rho\sigma}\dot{q}^\rho\dot{q}^\sigma+g^{\mu\nu}\partial_\nu\tilde{U}=0\,,
\ee
where $\Gamma$ is the Christoffel connection of the field space metric $g_{\mu\nu}$. In the \textit{free} case, i.e. without potential, this is simply the geodesic equation, and the scalar constraint $H =p^\mu p_\mu/2 \approx 0$ restricts to the null geodesics. In this case the evolution of a phase space functions $\cO$ along any trajectory is given by
\be
\dot \cO = \partial_t \cO + p^\mu \partial_\mu \cO\,, \q\q p^\mu \partial_\mu p^\nu =0\,.
\ee
In order to unravel the symmetries on the phase space of this system, we first need to study the Killing vectors of the field space metric $g_{\mu\nu}$, and to understand their relation with conserved quantities.

\subsubsection{Killing vectors and conserved quantities}

We now turn to the core of the discussion, and explain how the phase space symmetry algebra is related to the symmetries of the field space metric. We begin with the analysis of the \textit{free case}, for a vanishing potential. Let us start by considering the initial field space metric $\tilde{g}_{\mu\nu}$ without conformal rescaling by the lapse. We are interested in its conformal Killing vectors $\xi^\mu$, which by definition are such that
\be 
\pounds_\xi \tilde g_{\mu\nu} = \tilde \nabla_\mu \xi_\nu + \tilde \nabla_\nu \xi_\mu \doteq \tilde \lambda \tilde g_{\mu\nu}\,,\q\q   \pounds_\xi \tilde{g}^{\mu\nu} =-\tilde{\lambda} \tilde{g}^{\mu\nu} \,.
\ee
Here $\tilde \nabla$ is the covariant  derivative with respect to $\tilde g$. Whenever $\tilde \lambda$ is constant and non-vanishing we refer to these \textit{conformal} Killing vectors as \textit{homothetic} Killing vectors, while in the vanishing case we have \textit{true} Killing vectors. It is now easy to see that the presence of a conformal factor $N$ does not change the fact that a given vector $\xi^\mu$ is conformal, but changes however the value of $\lambda$. Indeed, we have
\be\label{rescaling in Killing eq}
\pounds_\xi g_{\mu\nu} =  \nabla_\mu \xi_\nu +  \nabla_\nu \xi_\mu =  \pounds_\xi \left (\f{1}{N} \tilde g_{\mu\nu}\right ) = \big (\tilde \lambda - \pounds_\xi\log N  \big ) \left (\f{1}{N} \tilde  g_{\mu\nu}\right ) \eqqcolon \lambda {g}_{\mu\nu}.
\ee
This means that a field-dependent change of lapse can change the nature of a conformal Killing vector, by making the weight non-constant, constant, or vanishing. This is a first indication of the importance of the lapse in this analysis.

In order to set the stage and continue the discussion, let us now assume that we have fixed the lapse once and for all, and that we have found the homothetic Killing vectors of the metric $g_{\mu\nu}$, i.e. the vectors such that $\lambda$ is constant (and possibly vanishing). The whole set $\left \{ \xi_{(i)} \right\}$ of homothetic and true Killing vectors then forms an algebra with commutation relations
\be
\big[\xi_{(i)}, \xi_{(j)}\big] = \left (\xi_{(i)}^\mu \nabla_\mu \xi_{(j)}^\nu  -\xi_{(j)}^\mu \nabla_\mu \xi_{(i)}^\nu \right )\partial_\nu = {c_{ij}}^k \xi_{(k)}\,,
\ee
where the coefficients ${c_{ij}}^k$ are the structure constants. Furthermore, as shown in appendix \ref{HKV_prop}, these Killing vectors are also solutions of the geodesic deviation equation
\be
p^\mu p^\nu \nabla_\mu \nabla_\nu \xi_\rho = - R_{\rho\mu\sigma\nu} p^\mu p^\nu \xi^\sigma \,.
\ee
With these vectors at hand let us now consider the phase space functions
\be\label{phase space C}
C_{(i)} \coloneqq p_\mu \xi_{(i)}^\mu\,.
\ee
By construction we then get that
\be
\lb C_{(i)},H\rb = p^\nu \partial_\nu \big ( p_\mu \xi_{(i)}^\mu \big ) = \f{1}{2}\lambda_{(i)}  g_{\mu \nu} p^\mu p^\nu =\lambda_{(i)} H\,,
\ee
which immediately implies that we have conserved quantities defined as
\be
\cQ_{(i)} \coloneqq  C_{(i)} - t \lambda_{(i)} H\,,\q\q \dot \cQ_{(i)}=\partial_t\cQ_{(i)}+\lb\cQ_{(i)},H\rb =0\,.
\label{linear_charges}
\ee
These conserved charges are linear in time, they are evolving constants of motion, as uncovered for FLRW cosmology and black hole mechanics in the previous works \cite{BenAchour:2017qpb,BenAchour:2019ufa,Geiller:2020xze,BenAchour:2020njq,Geiller:2021jmg,Sartini:2021ktb}. The charge $\cQ_{(i)} $ can indeed be understood as the initial condition at $t=0$ for the observable $C_{(i)}$. This set of conserved charges form part of the phase space symmetry algebra which we set out to unravel. One can already add to this set the Hamiltonian at this stage. Then the question which  remains is whether we can extend the construction to include charges which are quadratic (or higher polynomials) in time. This will be the goal of the next section.

As conserved quantities, the charges \eqref{linear_charges} are generators of symmetries of \eqref{mini_lagrang} whose infinitesimal action is
\bsub
\be
\delta q^\mu = \{q^\mu, \cQ_i\} &= \xi_{(i)}^\mu - \lambda_{(i)}  t\, p^\mu = \xi_{(i)}^\mu -t \lambda_{(i)}  \dot q^\mu \,,\\
\delta p_\mu = \{p_\mu, \cQ_i\} &= -p_\nu \partial_\mu \xi_{(i)}^\nu+\f{1}{2}t\lambda_{(i)}   p_\nu p_\rho \partial_\mu g^{\nu\rho}\,.
\ee
\esub
The interpretation of this symmetry transformation is as follows. Let us start with a geodesic which is affinely parametrized. Since the conformal Killing vectors are solutions of the geodesic deviation equation, the result of the transformation is to first move the point $q^\mu$ to a nearby geodesic. Then, is shifts along this new geodesic by $- t \lambda_{(i)} p^\mu$, in order to account for a potential dilatation of the spacetime generated by $\xi_{(i)}$, so as to obtain again an affine parametrization of the new solution.

Using the Poisson bracket on phase space, we can finally show that the conserved charges constructed from the $\xi_{(i)}$'s form an algebra which reproduces the Lie algebra of the Killing vectors, i.e.
\be\label{linear algebra}
\lb\cQ_{(i)},\cQ_{(j)}\rb = -{c_{ij}}^k\cQ_{(k)}\,.
\ee
As we will illustrate below, it should be noted that the size of this algebra (or even its existence) depends on the choice of the lapse (or conformal factor for the metric) $N$. This doesn't mean that we cannot recover the corresponding conserved charges in any time parametrization, but conversely that their form might depend on the history of the system. This happens because the factors of $t$ that appear in the expressions of the conserved charges \eqref{linear_charges}, when expressed in terms of another time $\tilde t$, would lead to non-local terms given by the integral of the ratio of the two lapses, associated
with the original and the new times: $N \de t = \widetilde N \de \tilde t \Rightarrow t = \int  \de \tilde{t}\, ( \widetilde N/N)$  \cite{francesco-thesis}.

Inspired by the symmetry structure which has been exhibited in previous work for cosmological minisuperspaces \cite{BenAchour:2020njq,BenAchour:2020ewm,BenAchour:2019ufa,BenAchour:2019ywl,BenAchour:2020xif,Achour:2021lqq} and for the black hole interior \cite{Geiller:2021jmg,Geiller:2020xze,Achour:2021dtj}, we would like to extend the above construction to charges which are quadratic in time. This can be done in a systematic manner in the case of two-dimensional field spaces, which fortunately enough covers many homogeneous models of general relativity. The construction in the case of higher-dimensional field spaces has to be done on a case by case basis.

\subsubsection{Inclusion of a potential}
\label{sec2:potential}

Before turning to the study of the two-dimensional field space geometries, we present a first brief discussion concerning the inclusion of a non-vanishing potential. Once again, this is an aspect which is heavily impacted by the choice of lapse, and which we illustrate below with many examples. Let us assume that the lapse has been chosen to be field-dependent. 

If the product $N(q)\,\tilde{U}(q)$ is equal to a constant $U_0$, then this constant does not play any fundamental role form the point of view of the symmetry algebra. All the above discussion is still valid, except for the fact that previously null trajectories now become massive or tachyonic. Indeed, such a constant potential is a boundary term for the point of view of the action. In the charge algebra, the effect of a constant term $U_0$ is simply to shift the value of the Hamiltonian to obtain the charge
\be
\cQ_0 \coloneqq H -U_0\,,\q\q \{\cO,H\} =\{\cO,\cQ_0\}\,.
\ee
This is exactly what happens in the case of KS cosmologies and the black hole interior \cite{Geiller:2021jmg,Geiller:2020xze}. We will review the explicit examples below.

Now, if the potential contains a non-constant piece, we use the following notation
\be\label{new potential}
N(q)\,\tilde{U}(q) =U(q) + U_0
\ee
to disentangle between a possibly constant term $U_0$ and a field-dependent piece $U(q)$. This separation evidently depends on the choice of lapse, and the constant piece is not uniquely determined. However, we must add an extra condition on the Killing vectors in order to be able to define the charges \eqref{linear_charges}. This condition is the requirement that the non constant piece gets rescaled similarly to the supermetric, i.e.
\be\label{potential_condition}
\pounds_\xi U = \xi^\mu\partial_\mu U= -\lambda U\,.
\ee
Indeed, this is necessary in order to ensure that we still have
\be
\lb C_{(i)}, \cQ_0\rb = \lb C_{(i)}, \f{1}{2}p^\mu p_\mu+  U \rb = \f{1}{2}\lambda_{(i)} p^\mu p_\mu - \xi^\mu \partial_\mu U = \lambda_{(i)} \cQ_0\,,
\ee
so that we can once again define the conserved quantities
\be
\cQ_{(i)} \coloneqq  C_{(i)} - t \lambda_{(i)} \cQ_{0}\,,\q\q \dot \cQ_{(i)} =0\,,
\ee
which in turn generate symmetries and form an algebra. Moreover the condition \eqref{potential_condition}, for $\lambda\neq 0$, uniquely fixes the constant piece $U_0$, so that the separation \eqref{new potential} actually depends both on the lapse and the vector $\xi$.

It is important to notice that under a conformal rescaling $\tilde{g}_{\mu\nu}=Ng_{\mu\nu}$ of the metric, corresponding to a change of lapse, the initial requirement $\pounds_\xi U=-\lambda U$ implies
\be
\pounds_\xi \tilde g_{\mu\nu}  =  \big(\lambda +\pounds_\xi \log N\big) \tilde g_{\mu\nu} =\tilde{\lambda} \tilde g_{\mu\nu} \,,  
\q
\pounds_\xi  \tilde{U}   = -\big(\lambda +\pounds_\xi \log N\big)\tilde{U} + \lambda \f{U_0}{N} =-\tilde{\lambda}\tilde{U}+\lambda \f{U_0}{N}\,.
\ee
This means that in the presence of a non-trivial potential the desired properties on the transformation of the metric and the potential are not covariant under changes of the lapse. This makes the analysis of the symmetries and their algebra very involved, since both the number of homothetic Killing vectors (satisfying the homothetic conditions on both the metric and the potential) and the decomposition of the potential \eqref{potential_condition} depend on the lapse. This is why we will have to perform this analysis on a case by case basis.

Another method to deal with a non-trivial potential is to use the so-called Eisenhart lift \cite{Cariglia:2015bla}. This consists in extending the field space by the addition of a new field $y$, in order to map the $n$-dimensional problem with potential to an $(n+1)$-dimensional free system. The Eisenhart lift is explicitly given by an enlarged supermetric $G_{AB}$, with
\be
G_{AB}=
\begin{pmatrix}
g_{\mu\nu} & 0 \\
0 & 1/(2U)
\end{pmatrix}
\ee
and where the new coordinates (encoding the fields) are $x^A=(q^\mu, y)$. The null geodesic equation of $G_{AB}$ on the configuration space $(q^\mu, y)$ gives exactly the same equation as $H=p^\mu p_\mu + U$, as can easily be verified by computing the Hamiltonian equation for $H_\rm{lift}= G^{AB} p_A p_B$ with $p_A =(p_\mu,p_y)$. In particular, the equation of motion for $p_y$ states that it is a constant. Once we pull back this into the equations for $q^\mu$ and $p_\mu$, these become equivalent to the original equations for the $n$-dimensional problem.

It is possible to show that searching for the homothetic Killing vectors of the supermetric $G_{AB}$ of the enlarged space is the same as finding a simultaneous solution of the rescaling equations \eqref{potential_condition} and \eqref{rescaling in Killing eq} for the supermetric and potential, $\pounds_\xi g_{\mu\nu}  ={\lambda} g_{\mu\nu}$ and  $\pounds_\xi U  =-{\lambda} U$. Although this apparently looks equivalent, it nevertheless allows for vector fields $\xi$ which can depend on the new coordinate $y$ and have non-trivial components along $\partial_{y}$. Unfortunately, this new coordinate does not have a physical meaning and there is no way to extract non-trivial Dirac observables from such a homothetic Killing vector on the Eisenhart lift metric.

Nevertheless, there exists an improved version of the lift, called the Eisenhart--Duval lift \cite{Cariglia:2016oft,Kan:2021yoh}, which adds two new coordinates $(u,w)$ instead of the one $y$, and uses a Brinkmann supermetric
\be
g_{\mu\nu}\dd x^{\mu}\dd x^{\nu}-2U \dd u ^{2}+\dd u\, \dd w\,.
\ee
As shown in \cite{BenAchour:2022fif}, this improved lift identifies the new coordinate $u$ to the time coordinate $t$ along null geodesics, and its Killing vectors directly provide time-dependent conserved charges.

Finally, we would like to stress that the construction presented here strongly relies on the separation between configuration space and momenta. However, the classical description is of course invariant under any canonical transformation that could mix the two. In the case of a momentum-dependent potential, we could proceed as follows: find a canonical transformation that maps the Hamiltonian to a form like \eqref{curved_particle_model}, then apply the construction above and finally go back to the original variables.

\section{Two-dimensional field space geometries}
\label{sec:2d}

We now discuss the extension of the algebra \eqref{linear algebra} to charges which are quadratic in time. We will focus initially on the case of field space geometries which are two-dimensional and flat. Interestingly this is sufficient to cover many minisuperspace models of general relativity, as we will detail in section \ref{application}.

\subsection{Free case}

Let us consider the case of two-dimensional field spaces. Since every two-dimensional metric is conformally flat, we can use coordinates such that in Lorentzian signature (which is the field space signature of interest for us) we have
\be\label{conformal_2d}
\de\tilde{s}^2_\rm{mini} = -2 e^{-\varphi} \de \tilde u\, \de \tilde v\,.
\ee
Here all the dynamical information is encoded in the conformal factor $\varphi(\tilde{u},\tilde{v})$. The structure of the homothetic Killing vectors depends on the Ricci scalar curvature, which is given by
\be
R= -2e^{\varphi} \partial_{\tilde u} \partial_{\tilde v} \varphi\,.
\ee
In the non-flat case $R\neq0$ the number of independent homothetic Killing vectors varies on a case by case basis. For example we have the following cases:
\begin{itemize}	
\item For (A)dS$_2$, which corresponds to $\varphi = 2 \log(u\mp v)$, there are only the three Killing vectors
\be
\xi^\mu\partial_\mu =\left( \c_1 u^2 + \c_2 u +\c_3\right )\partial_u + \left (\pm \c_1 v^2 + \c_2 v \pm \c_3 \right )\partial_v \,,\q \pounds_\xi g_{\mu\nu} = 0\,,\q \c_i =\text{const.}\,,
\ee
where the upper sign is for AdS and the lower one is for dS.
\item For $\varphi = u^2 v$ there is only one homothetic Killing vector given by
\be
\xi^\mu\partial_\mu = -\lambda (u\partial_u - 2v \partial_v) \,,\q\q \pounds_\xi g_{\mu\nu} = \lambda g_{\mu\nu} \,.
\ee
\item For $\varphi = u v$, there is only one true Killing vector given by
\be
\xi^\mu\partial_\mu = - u \partial_u+  v\partial_v \,,\q\q \pounds_\xi g_{\mu\nu} = 0\,.
\ee
\end{itemize}  
The case of interest for us is that of a flat field space geometry.

In the free case (i.e. with vanishing potential), it is always possible to reduce the problem to that in a flat field space by a suitable choice of lapse. Indeed, in the Lagrangian \eqref{mini_lagrang} with coordinates such that the metric is \eqref{conformal_2d} we can simply choose the lapse to be $N=e^{-\varphi}$, so that $g_{\mu\nu}=\tilde{g}_{\mu\nu}/N$ is indeed flat. Alternatively, it might be the case that there are null coordinates such that $\tilde{g}_{\mu\nu}$ by itself is already flat. This is the case whenever we can separate $\varphi(\tilde{u},\tilde{v})=\varphi_1(\tilde{u}) + \varphi_2(\tilde{v})$. In this case we can simply take e.g. $N=1$, or any lapse of the form $N=N_1(\tilde{u})N_2(\tilde{v})$. Let us now assume that we are in the flat case, and use standard null coordinates to write
\be\label{Mink_de_s}
\de s^2_\rm{mini} = -2 \de u\, \de v\,.
\ee
This metric has three true Killing vectors (two translation and one boost), and one homothetic vector corresponding to the dilatation of the plane. For each of these vectors we can compute the factor $\lambda_{(i)}$ and the phase space function \eqref{phase space C}. In null coordinates we find
\bsub\label{HKV_mink}
\be
\text{time translation:}\quad&	& \xi_{\rm t}^\mu\partial_\mu =&\;\partial_u+\partial_v &\quad \lambda_{\rm{t}}=&\;0 
\quad&C_\rm{t} &=p_u + p_v\,,  \\
\text{space translation:}\quad&	& \xi_{\rm x}^\mu\partial_\mu =&\;\partial_u-\partial_v & \lambda_{\rm{x}}=&\;0
&C_\rm{x} &=p_u - p_v\,,  \\
\text{boost:}\quad&					& \xi_{\rm b}^\mu\partial_\mu =&\;u\partial_u-v\partial_v & \lambda_{\rm{b}}=&\;0
&C_\rm{b} &=u p_u -v  p_v\,,  \\
\text{dilatation:}\quad&				& \xi_{\rm d}^\mu \partial_\mu=&\;\f{1}{2}(u\partial_u+v\partial_v)	& \lambda_{\rm{d}}=&\;1
&C_\rm{d} &=\f{1}{2}(u p_u + v p_v)\,.\label{2d dilatation}
\ee
\esub 
Thanks to these Killing vectors, in addition to the Hamiltonian $\cQ_0\coloneqq H=-p_up_v$ we therefore get from \eqref{linear_charges} the four conserved charges $(\cQ_\rm{t}=C_\rm{t},\cQ_\rm{x}=C_\rm{x},\cQ_\rm{b}=C_\rm{b},\cQ_\rm{d}=C_\rm{d} - t \cQ_0)$ which are linear in time.

In two dimensions, the vanishing of the Ricci scalar and of the Riemann tensor are equivalent. This has an immediate implication on the existence of another set of charges. One can indeed show (see appendix \ref{HKV_prop}) that the following quantities have a vanishing third derivative:
\be
V_{(ij)} \coloneqq g_{\mu\nu} \xi_{(i)}^\mu \xi_{(j)}^\nu\,, \q\q \dddot{V}_{(ij)} =0\,.
\ee
The observables  $V_{(ij)}$ are similar to the 3d volume (or the scale factor) in FLRW cosmologies \cite{BenAchour:2019ufa,BenAchour:2020njq} or the metric components for the Kantowski--Sachs black hole ansatz  \cite{Geiller:2021jmg,Achour:2021dtj}. With the Killing vectors \eqref{HKV_mink} at hand we can explicitly calculate the scalar products $V_{(ij)}$ and find the non-trivial combinations
\be
V_{\rm{bt}}=-2 V_{\rm{dx}}=v-u\eqqcolon V_1\,,\q\!
V_{\rm{bx}}=-2 V_{\rm{dt}}=u+v\eqqcolon V_2\,,\q\!
V_{\rm{bb}}=-4 V_{\rm{dd}}=2 u v\eqqcolon V_3\,,\!
\ee
as well as the central elements $V_{\rm{xx}}=-V_{\rm{tt}}=2$. By construction the second derivative of the $V_{(ij)}$'s is a constant and in particular the following charges are conserved:
\be
\cQ_1\coloneqq V_1 + t\, C_\rm{x}\,,\q\q
\cQ_2\coloneqq V_2 + t\, C_\rm{t}\,, \q\q
\cQ_3\coloneqq V_3 + 4 t\, C_\rm{d} -2 t^2\cQ_0\,.
\ee
This can be checked using the fact that the brackets with the Hamiltonian are $\lb C_{(i)},\cQ_0\rb=\lambda_{(i)}\cQ_0$ and
\be\label{Vij Q0 algebra}
\lb V_1,\cQ_0\rb=-C_\rm{x}\,,\q\q\lb V_2,\cQ_0\rb=-C_\rm{t}\,,\q\q\lb V_3,\cQ_0\rb=-4C_\rm{d}\,.
\ee
Together with the charges $(\cQ_0,\cQ_\rm{t},\cQ_\rm{x},\cQ_\rm{b},\cQ_\rm{d})$ found previously, these charges form an 8-dimensional algebra isomorphic to the semi-direct sum $\big(\sl(2,\R) \oplus\R\big)\loplus\mathfrak{h}_2$, where $\mathfrak{h}_2$ is the two-dimensional Heisenberg algebra\footnote{Note that this algebra has four generators, which are analogous to the two positions and two momenta of a two-dimensional space obeying the Heisenberg commutation relations.}.

One should note that the charges $(\cQ_0,\cQ_1,\cQ_2,\cQ_3,\cQ_\rm{t},\cQ_\rm{x},\cQ_\rm{b},\cQ_\rm{d})$ are the initial conditions of the phase space functions $(\cQ_0,V_1,V_2,V_3,C_\rm{t},C_\rm{x},C_\rm{b},C_\rm{d})$, obtained by setting $t=0$. This explains why only the former set is conserved.

This procedure, which we have outlined here in the two-dimensional case, can be extended to a field space of arbitrary dimension, although it is of course not guaranteed to then lead to the same results. If the field space is non-flat and higher-dimensional we are indeed not guaranteed that homothetic Killing vectors exist, nor that the observables $C_{(i)}$ and $V_{(ij)}$, if they exist, form a closed algebra with the Hamiltonian. We give in appendix \ref{HKV_prop} the equation \eqref{condition on xi's} which has to be satisfied by the Killing vectors in order for this to be the case. Although this equation has a simple form, one cannot find an a priori condition on the field space geometry (and on the potential if this latter is non-trivial) which guarantees the existence of an algebra on phase space. With the construction presented here we have however the tools to investigate this in a case by case basis.

\subsubsection{Charge algebra}

In order to study the charge algebra it is convenient to slightly rewrite the above generators and first introduce the four quantities
\be\label{sl_2+R}
\cL_{-1} \coloneqq -\cQ_0\,,\q\q \cL_{0} \coloneqq \cQ_\rm{d}\,,&\q\q \cL_{1} \coloneqq \f{1}{2} \cQ_3\,, 
\q\q \cD_0 \coloneqq \cQ_\rm{b}\,,
\ee
which span $\sl(2,\R) \oplus\R$, as well as the four generators
\be\label{h_2}
\cS_{1/2}^\pm \coloneqq \f{1}{2}\big(\cQ_2 \pm \cQ_1\big)\,,&\q\q\cS_{-1/2}^\pm \coloneqq \f{1}{2}\big(\cQ_\rm{t} \pm \cQ_\rm{x}\big)\,,
\ee
which span $\mathfrak{h}_2$. These generators can finally be repackaged in the compact form
\bsub
\be
\cL_n&= \f{1}{4}\ddot f V_3 + \dot f C_{\rm d} - f \cQ_0\,,  &f(t) &= t^{n+1}\,, & n & \in \{-1,0,1\}\,,\\
\cS^\pm_s&= \f{1}{2}\Big(\big(\dot g V_2 + g C_\rm{t}\big) \pm\big(\dot g V_1 + g C_\rm{x}\big)\Big) \,,&g(t) &= t^{s+1/2}\,, & s & \in \left\{-\f{1}{2},\f{1}{2}\right\}\,.
\ee
\esub
We then find that they form the algebra
\bsub\label{charge_algebra}
\be
\lb \cL_n,\cL_m\rb &= (n-m) \cL_{n+m}\,,\label{sl2R part}\\
\lb \cL_n,\cD_0\rb&=0\,,\label{R part}\\
\lb \cS_s^\epsilon,\cS_{s'}^{\epsilon'}\rb &= 2 n \delta_{\epsilon+\epsilon'} \delta_{s+s'}\,,\label{h2 part}\\
\lb \cL_n,\cS_s^\pm\rb &= \left (\f{n}{2}-s\right ) \cS_{n+s}^\pm\,,\\
\lb \cD_0,\cS_s^\pm\rb &= \pm \cS_{s}^\pm\,,
\ee
\esub
with $n,m \in \{0,\pm1\}$ and $s,s' = \pm1/2$. As announced, this algebra is
\be\label{A algebra}
\cA=\big(\sl(2,\R) \oplus\R\big)\loplus\mathfrak{h}_2,
\ee
where the direct sum corresponds to the two brackets \eqref{sl2R part} and \eqref{R part}, the Heisenberg part comes from the centrally-extended bracket \eqref{h2 part}, and the remaining two brackets encode the semi-direct structure.

Because the physical system which we are describing has a four-dimensional phase space spanned by the coordinates $\{u,v,p_u,p_v\}$, these charges cannot be all independent. Indeed, one can show that they are related by the quadratic construction
\be\label{identities}
\cL_n = \sum_{k=-1/2}^{1/2} k \left (\f{n}{2}+k\right ) \big(\cS^+_k \cS^-_{n-k}+\cS^-_k \cS^+_{n-k}\big)\,,\q\q\cD_0= 2\sum_{k=-1/2}^{1/2} k\cS^-_k \cS^+_{-k}\,.
\ee
From this point of view, one may see the four generators $\cS^\pm_s$ of $\mathfrak{h}_2$ as the fundamental building blocks of the symmetry algebra. They represent the four constants of the motion which are preserved in the evolution, and any (possibly non-linear) combination of them will again be conserved on trajectories and generate a symmetry. The relations \eqref{identities} are analogous to the Sugawara construction.

In order to make contact with the symmetry algebra discussed in \cite{Geiller:2021jmg,Geiller:2020xze}, we can start from the linear charges $\cS_s^\epsilon$ to build two sets of generators as
\be
\cT^\pm_n\coloneqq \sum_{k=-1/2}^{1/2} k \left (\f{n}{2}+k\right ) \cS^\pm_k \cS^\pm_{n-k}\,.
\ee
Due to the Heisenberg central extension, these two abelian sets of generators do not commute and give
\be
\lb\cT^\pm_n, \cT^\pm_m\rb =0\,,\q\q\lb\cT^+_n, \cT^-_m\rb = \f{1}{2}(n-m)\cL_{n+m} -\f{1}{4}(n^2+m^2-nm-1) \cD_0\,.
\ee
Finally, these generators can be combined into
\be
\cP^\sigma_n\coloneqq\cT^-_n+\sigma\cT^+_n\,,
\ee
for which we find
\be
\lb \cP^\sigma_n , \cP^\sigma_m \big\}=  \sigma (n-m) \cL_{n+m}\,,\q\q
\lb \cP^\sigma_n , \cL_m \big\}=  (n-m) \cP^\sigma_{n+m}\,.
\ee
Together with the brackets \eqref{sl2R part}, these commutation relations show that $(\cP^\sigma_n,\cL_n)$ span an algebra $\mathfrak{g}_\sigma$ which, depending on the sign of $\sigma$, is either the isometry algebra $\mathfrak{g}_{\sigma>0}=\so(1,3)$ of dS, the isometry algebra $\mathfrak{g}_{\sigma<0}=\so(2,2)$ of AdS, or the isometry algebra $\mathfrak{g}_{\sigma=0}=\iso(2,1)$ of Minskowski. This latter, which is the (2+1)-dimensional Poincaré algebra, is the algebra which was studied in \cite{Geiller:2021jmg,Geiller:2020xze}.

\subsubsection{Symmetry transformations}

Let us now focus on the 8-dimensional algebra \eqref{A algebra} and investigate the corresponding symmetry group. Indeed we view the conserved charges as Noether charges, whose Poisson brackets have an exponentiated flow which generates finite symmetry transformations. Thus we start by computing their infinitesimal action given by their Poisson brackets with the field space coordinates $q^\mu=(u,v)$. The action on these null coordinates is found to be
\be
\begin{pmatrix}
\delta u\\
\delta v
\end{pmatrix}
=
\begin{pmatrix}
\df{1}{2}\dot f(t) u - f(t) \dot u + k u+ g(t)\vspace{0.3cm}\\
\df{1}{2}\dot f(t) v - f(t) \dot v - k v + h(t)
\end{pmatrix}\,,
\ee
where $k\in\R$, the function $f(t)$ is quadratic in $t$, and the functions $g(t)$ and $h(t)$ are linear in $t$. A straightforward calculation shows that these finite transformations are indeed symmetries of the action
\be\label{free action}
\cS= -\int \de t\,\dot u\dot v\,.
\ee
The algebra of these transformations is
\be
\big[\delta(f_1, g_1, h_1, k_1 ),\delta(f_2,g_2,h_2,k_2)\big]_{\rm{Lie}} = \delta(f_{12},g_{12},h_{12},k_{12})\,,
\ee
with
\bsub\label{Lie_commutator}
\be
f_{12}&= \dot f_1 f_2 -(1\leftrightarrow 2)\,,\\
k_{12}&=0\,,\\
g_{12}&= -\f{1}{2} g_1 \dot f_2 + f_2 \dot g_1 -k_1 g_1 -(1\leftrightarrow 2)\,,\\
h_{12}&= -\f{1}{2} h_1 \dot f_2 + f_2 \dot h_1 +k_1 h_1 -(1\leftrightarrow 2)\,.
\ee
\esub
This is equivalent to the charge algebra \eqref{charge_algebra}, up to the Heisenberg central extension. Each $\cL_n$ generates the coefficient $t^{n+1}$ in the function $f$, each $\cS^-_s$ generates the coefficient $t^{s+1/2}$ of $h$, and each $\cS^+_s$ generates the coefficient $t^{s+1/2}$ of $g$. Finally, $\cD_0$ corresponds to $k$.

These infinitesimal transformations can can be integrated to a finite group action. The exponentiated generators \eqref{sl_2+R} give the scaled M\"obius transformations, acting as the group $\SL (2, \R)\times \R $ on the null coordinates via
\be
\label{mobius_sym} 
\begin{pmatrix}
u\\
v\\
\end{pmatrix}\mapsto \big(\dot f\big)^{1/2}
\begin{pmatrix}
k\, u\\
k^{-1}v\\
\end{pmatrix} \circ f^{(-1)}=  \f{1}{c t + d}
\begin{pmatrix}
k\, u\\
k^{-1}v\\
\end{pmatrix}\circ f^{(-1)}\,,\q\q\text{where}\;\; f(t)=\f{a t + b}{c t + d}\,,
\ee
for constants coefficients $(k,a,b,c,d)$ such that $ad-bc=1$. The action of the Heisenberg generators \eqref{h_2} exponentiates to that of the abelian group $\R^4$ acting as
\be 
\label{abelian_symm}
\begin{pmatrix}
u \\
v \\
\end{pmatrix}\mapsto 
\begin{pmatrix}
u + \alpha t + \beta \\
v + \gamma t + \delta \\
\end{pmatrix}\,.
\ee
Again, one can check that these are symmetries of \eqref{free action}.

\subsection{Inclusion of a potential}

The algebra $\cA$ obtained above always exists in the free case. We now would like to tackle the case of a non-vanishing potential, and determine if this algebra (or a suitable deformation thereof) exists, or if a sub-algebra survives, or if there is no symmetry structure. In the previous section \ref{sec2:potential}, we have briefly discussed our strategy for the case with a non-vanishing potential, looking for Killing vectors along which the potential gets rescaled. With the explicit expression for the Killing vectors for the free theory at hand, we can now push this analysis further.

Considering an arbitrary minisuperspace model with two-dimensional field space, one can always pick null conformally-flat coordinates such that the metric on field space simply takes the form $\de\tilde{s}^2_\rm{mini} = -2 e^{-\varphi} \de \tilde u\, \de \tilde v$, so that the Lagrangian reads
\be\label{general null L with U}
\cL=-\f{1}{N}e^{-\varphi}\dot{\tilde{u}}\dot{\tilde{v}}-N\tilde{U}\,,
\ee
where the potential $\tilde{U}$ and the conformal factor $\varphi$ follow from the definition of the minisuperspace, while the lapse $N$ can be chosen arbitrarily. We see two natural possibilities for studying this Lagrangian.
\begin{itemize}
\item[(1)] Choose $N=1/\tilde{U}$ so as to trivialize the potential term.
\begin{itemize}
\item[(1.1)] If the resulting conformal factor $\tilde{U} e^{-\varphi}$ is such that the supermetric is flat then the whole algebra $\cA$ survives by the above construction.
\item[(1.2)] If the resulting supermetric is not flat, one should go back to the beginning of the analysis to study the conformal Killing vectors $\xi_{(i)}$, construct the objects $C_{(i)}$ and $V_{(ij)}$, and compute their algebra.
\end{itemize}
\item[(2)] Choose $N=e^{-\varphi}N_1(\tilde{u})N_2(\tilde{v})$ such that the supermetric is flat, and then study whether the resulting potential $U=N\tilde{U}$ satisfies $\pounds_\xi U=-\lambda U$ for \eqref{HKV_mink}.
\end{itemize}
This procedure has to be implemented on a case by case basis, and we give the examples of physical interest in section \ref{application} below. In general, we expect that demanding that the homothetic Killing vectors of the metric also be Killing vectors of the potential will reduce the number of admissible vectors, and therefore also reduce the size of the symmetry algebra.

In order to illustrate this, let us consider an example, relevant for the Bianchi models, where the shifted potential $U(q)=N(q)\tilde{U}(q)-U_{0}$, as defined earlier in \eqref{new potential}, is after a suitable choice of lapse a monomial in the two null coordinates, i.e. of the form $U=u^nv^m$. This leads to three possibilities:
\begin{itemize}
\item[(1)] If $n=0$, $m\neq 0$, then there are two vectors satisfying \eqref{potential_condition}, given by
\be
\xi_1 = \xi_\rm{t}+\xi_\rm{x}\,,\q\quad \xi_2 = (2+m)\xi_\rm{b}+2m\xi_\rm{d}\,,\q\quad\lambda_1=0\,,\q\quad\lambda_2=2m\,.
\ee
\item[(2)] If $n\neq0$, $m=0$, we have
\be
\xi_1 = \xi_\rm{t}-\xi_\rm{x}\,,\q\quad \xi_2 = (2+n)\xi_\rm{b}-2n\xi_\rm{d}\,,\q\quad\lambda_1=0\,,\q\quad\lambda_2=-2n\,.
\ee
\item[(3)] If $n,m\neq 0$, then there is a single admissible vector given by
\be\label{case 3 for U}
\xi_1 =(2+n+m)\xi_\rm{b}+2(m-n)\xi_\rm{d}\,,\q\q\lambda_1=2(m-n)\,.
\ee
\end{itemize}
This shows that the existence and the number of the linear charges $\cQ_{(i)}$ defined in \eqref{linear_charges} depends on the potential. Then, in order for the quadratic charges to exist and close the algebra, we need to have closed brackets as in \eqref{Vij Q0 algebra} between the phase space functions $V_{(ij)} = g_{\mu\nu} \xi_{(i)}^\mu \xi_{(j)}^\nu$, the Hamiltonian $\cQ_0$, and the functions $C_{(i)}$. For example in the third case above, with $n,m\neq0$ we compute
\bsub
\be
V_{11}=\xi^2_1 &= (2+n+m)^2 \xi_\rm{b}^2 -4(m-n)^2 \xi_\rm{d}^2 = 4(1+m)(1+n) V_3\,,\\
C_1=p_\mu \xi^\mu_1 &= (2+n+m) C_\rm{b} -2(m-n) C_\rm{d}\,, \\
\{C_1,V_{11}\} &= 2(n-m)V_{11}\,,\\
\{\cQ_0,V_{11}\} &= 8(1+m)(1+n) C_\rm{d}\,.
\ee
\esub
In order for the last bracket to close on $C_1$ the potential must be such that $2+n+m\stackrel{!}{=}0$. In this case the algebra of the charges reduces to $\sl(2,\R)\subset\cA$. This situation, which corresponds to a potential $U\propto u^nv^m$ with $(n+m)=-2$ which is homogeneous of degree $-2$, is the two-dimensional equivalent of the conformal potential $U=1/q^2$. We shall remark that also for the cases $m=-1$, $n=-1$ the algebra closes, but $V_{11}$ is trivial and the algebra is two dimensional, containing only $\cQ_0$ and $C_1$. 

We note in passing that the case of a model with one-dimensional field space is trivial and always admits an $\sl(2,\R)$ algebra. Indeed, if the Lagrangian is $\cL=\tilde{g}(q)\dot{q}^2/N-N\tilde{U}$ one can choose $N=1/\tilde{U}$ and work with the rescaled metric $g=\tilde{g}\,\tilde{U}$. The resulting phase space with Hamiltonian $H=g^{-1}p^2/2+1$ then supports an $\sl(2,\R)$ algebra spanned by $(C,V,\cQ_0)$ with
\be\label{1d case}
\cQ_0=H-1\,,\q\q V=\xi^2\,,\q\q C=g^{-1}p\,\xi\,,\q\q\xi=\c_1+\f{\c_2}{2}\int^q\de q'\sqrt{g(q')}\,,
\ee
where $\c_1$ and $\c_2\neq0$ are otherwise arbitrary constants. This simple setup is here deliberately presented in an analogous way to the case of a higher-dimensional field space.

\subsection{Applications}
\label{application}

Let us now apply the construction presented above to some examples of supermetrics on field space inherited from minisuperspace models of general relativity. We study Kantowski--Sachs (KS) cosmologies describing the black hole interior, Bianchi models, and then FLRW cosmology with a scalar field.

\subsubsection{Black holes and Bianchi models}

The Bianchi classification is based on the nature of the three-dimensional Lie algebra of spacetime vector fields which leaves the spatial triads $e^i_\mu(x)$ invariant. This classification does however leave the internal metric $\gamma_{ij}$ completely arbitrary. Here we are interested in the choices which solve the vector constraint, as explained in section \ref{sec:homogeneous}. The classification restricted to the case of metrics solving the vector constraint is given in appendix \ref{Bianchi_metrics}. For instance KS cosmologies do not belong to the family of Bianchi models, because their spatial slices do not admit a three-dimensional algebra of isometries. They are nonetheless minisuperspaces satisfying the vector constraint, with a field space geometry actually analogous to that of Bianchi models, and they therefore naturally enter the scope of the present analysis.

All the Bianchi models have a two-dimensional field space, appart from Bianchi I and II which we discuss in section \ref{sec:Bianchi I II}. The KS model and the Bianchi models\footnote{To the list which is studied here we should add the model VI$_{h}$. This has however the same symmetry properties as VI$_0$, but its treatment involves some lengthy expressions. The interested reader can look at \cite{francesco-thesis}, where the formulas pertaining to the VI$_h$ model are reported.} with two-dimensional field space have an internal metric given by
\bsub
\be
&\text{(III, VI$_0$, VIII, IX, KS)} &&\to& \gamma_{ij} &= \rm{diag}\left(a^2,b^2,b^2\right)\,,\\
&\text{(IV, V, VII)} &&\to & \gamma_{ij} &= \rm{diag}\left (a^2,\f{a^4}{b^2},b^2\right)\,.
\ee
\esub
The kinetic term of the symmetry-reduced action \eqref{Einstein_mini}, from which we read the supermetric $\tilde{g}_{\mu\nu}$, depends only on the internal metric $\gamma_{ij}$. We find
\be
\f{\cV_0}{4GN}\sqrt{\gamma}\Big((\gamma^{ij} \dot \gamma_{ij})^2+ \dot \gamma_{ij} \dot \gamma^{ij}\Big)= 
- \df{2 \cV_0}{GN}\,\tilde{g}_{\mu\nu}\dot{q}^\mu\dot{q}^\nu\,,
\ee
with the configuration variables $q^\mu=(a,b)$ and the following supermetrics:
\bsub
\be
&\text{(III, VI$_0$, VIII, IX, KS)} &&\to& \tilde{g}^{\mu\nu} &= \begin{pmatrix}0&b\\b&a\end{pmatrix}\,,\\
&\text{(IV, V, VII)} &&\to& \tilde{g}^{\mu\nu} &=\f{a}{b^2}\begin{pmatrix}2b^2&ab\\ab&-a^2\end{pmatrix} \,.
\ee
\esub
Now, we can use the following change of coordinates to the null conformal parametrization \eqref{conformal_2d}:
\bsub\label{u v tilde map}
\be
&\text{(III, VI$_0$, VIII, IX, KS)} &&\to& & \tilde u = \f{2\sigma\sqrt{2}}{\sqrt{3}}\, a \sqrt{b}\,, && \tilde v = \df{2 \sqrt{2}}{\sigma\sqrt{3}}\, b^{3/2}\,, \\
&\text{(IV, V, VII)} &&\to& & \tilde u= \f{2\sigma\sqrt{2}}{\sqrt{3}}\, a^{(3+\sqrt 3)/2} b^{-\sqrt{3}/2}\,, \quad&& \tilde v =\df{2 \sqrt{2}}{\sigma\sqrt{3}}\, a^{(3-\sqrt 3)/2} b^{\sqrt{3}/2}\,,
\ee
\esub
where $\sigma$ is an arbitrary real parameter. With this, we find that all the models under consideration, i.e. (III, IV, V, VI$_0$, VII, VIII, IX, KS), lead to actions which can all be written in the same compact form
\be\label{all Bianchi actions}
\cS=\int\de t\left[-\f{\cV_0}{GN}\dot {\tilde u} \dot {\tilde v}-N\tilde{U}\right],
\ee
and which are distinguished by their potential $\tilde{U}$. Comparing to the general form of the Lagrangian \eqref{general null L with U}, we see that all these models also have a vanishing conformal factor, $\varphi=0$. Finally, one should note that here we are keeping track of the dimensional constants $\cV_0$ and $G$.

We now distinguish two families of models depending on their potential:
\begin{itemize}
\item[(1)] The models (III, V, VI, KS), for which the potential separates in the product of two functions of the null directions.
\item[(2)] The models (IV, VII, VIII, IX), for which this does not happen.
\end{itemize}

\paragraph{Models (III, V, VI, KS).}

In this case we simply need to take $N=1/\tilde{U}$. Indeed, this trivializes the potential term in \eqref{all Bianchi actions}, and leads to a flat field space metric for which we already know the construction of the algebra $\cA$ following the previous sections. In order to import verbatim these previous results, we may need to change coordinates one last time in order to put the (already) flat supermetric in the precise form \eqref{Mink_de_s}.

Let us illustrate this construction by focusing on the Bianchi III model. The other models are all explicitly analyzed in appendix \ref{Bianchi_metrics}. In conformal null coordinates, the action \eqref{action example} for the Bianchi III model reads
\be
\cS_\text{III}=\int\de t\left[-\f{\cV_0}{GN}\dot {\tilde u} \dot {\tilde v}-N\tilde{U}\right]\,,\q\q \tilde{U} = - \f{3^{1/3} \cV_0}{\sigma^{4/3}GL_s^2} \df{\tilde u}{\tilde{v}^{1/3}}\,.
\ee
Taking $N=\cV_0^2/(G^2L_s^2\tilde{U})$ (where the dimensional factors enter the relation between the Hamiltonian $H$ and the charge $\cQ_0$) and changing coordinates to
\be
\label{conf_to_null_exmpl}
u =- \df{3\tilde u^2}{4\sigma^2}    =-2 a^2 b\,,\q\q
v = \big(3\sigma^2\tilde v^2\big)^{1/3}  =2 b \,,
\ee
finally puts the action in the form
\be
\cS_\text{III}=\int\de t\left[-\dot{u}\dot{v}-\f{\cV_0^2}{G^2L_s^2}\right]\,.
\ee
This is precisely the setting discussed in section \ref{sec:2d} for the free case, but now with a constant potential $U_0$ which simply shifts the Hamiltonian. Using the canonical transformation \eqref{conf_to_null_exmpl} we can import the charges obtained previously to find the explicit expression for the conserved charges of the Bianchi III model. These are
\bsub
\be
C_\rm{t}&=-\f{p_a}{4 a b} +   \f{a p_a+ 2 b p_b}{4 b}\,,&
C_\rm{b}&= -a p_a - b p_b\,, \\
C_\rm{x}&=-\f{p_a}{4 a b} -   \f{a p_a+ 2 b p_b}{4 b}\,,&
C_\rm{d}&= \f{b p_b}{2}\,,\\
V_1&= 2b\left (1-a^2 \right )\,,&
V_3&= 8 a^2 b^2\,, \\
V_2&= 2b\left (1+a^2 \right )\,,&
\cQ_0&=H- \f{\cV_0^2}{G^2L_s^2} = \f{p_a (a p_a-2 b p_b)}{16 a b^2}\,.
\ee
\esub
By construction these charges obey the algebra $\cA$, which is therefore fully implemented in the models (III, V, VI, KS)\footnote{In \cite{Geiller:2021jmg,Geiller:2020xze} we have used for the KS model the line element $\de s_\rm{KS}^2 = -N^2 \de t +(8v_2/v_1)\de x^2 + v_1 L_s^2 \de \Omega^2$, related to the one used here by $(v_1,8v_2/v_1)=(b^2,a^2)$ and $(v_1,v_2)=(u^2/4,uv)$. On these variables we found that the transformation generated by $\cL_n$ is $\delta_fv_i=\dot{f}v_i-f\dot{v}_i$ with $f(t)= t^{n+1}$ (these are the superrotations when $n\in\Z$), while $\cT^-_n$ acts as $\delta_g(v_1,v_2)=(0,\dot{g}v_1-g\dot{v}_1)$ with $g(t)= t^{n+1}$ (these are the supertranslations when $n\in\Z$). The action of $\cT^+_n$, which we have not studied in these two references, is $\delta_g(v_1,v_2)=(1,v_2/(2v_1))(v_2\dot{g}+g(\dot{v}_1v_2/v_1-2\dot{v}_2))/4$ with $g(t)= t^{n+1}$. This is also a symmetry of the symmetry-reduced KS action.}
. Note that we have kept track of the dimensional factors in order to have a shift of the Hamiltonian defining $\cQ_0$ with dimension of inverse length squared.

\paragraph{Models (IV, VII, VIII, IX).} The study of this family is a little more subtle. First, one should notice that setting $N=1/\tilde{U}$ for these models does not lead to a flat supermetric. Second, it turns out that this non-flat supermetric does not admit any Killing vectors, so the method fails when the lapse is chosen as the inverse of the full potential. However, it turns out that for these models the potential is actually a sum of two or three monomials of the form $\tilde u^n \tilde v^m$. This opens the possibility of choosing the lapse so as to set one of these terms of the potential to a constant. By construction, for these models since $\varphi=0$ such a choice of lapse will lead to a flat field space metric. One can then choose coordinates to put this metric in the form \eqref{Mink_de_s}, and finally study the condition \eqref{potential_condition} for the remaining terms in the potential.

Let us illustrate this construction on the general action of the form
\be
\cS= \f{\cV_0}{G} \int \de t \left [-\f{1}{N}\dot {\tilde u} \dot {\tilde v}+ N\Big (\c_1 \tilde u^{n_1} \tilde v^{m_1} + \c_2 \tilde u^{n_2} \tilde v^{m_2}\Big )\right ]\,.
\ee
Without loss of generality we can now choose the lapse and the coordinates (with $m_i, n_i \neq -1$) as
\be
N = \f{\cV_0}{G} \f{1}{\tilde u^{n_1} \tilde v^{m_1}}\,,\q\q
u = \df{\tilde u^{n_1+1}}{1+n_1} \,,\q\q
v = \df{\tilde v^{m_1+1}}{1+m_1} \,,
\ee
to obtain
\be
\cS=\int \de t \Big [-\dot {u} \dot {v}-(U+U_0)\Big ]\,,
\ee
with the potential
\be
U_0=-\f{\cV_0^2}{G^2}\c_1\,,\q\q U=-\f{\cV_0^2}{G^2}\c_2 \big((n_1+1)u\big)^{\f{n_2-n_1}{n_1+1}} \big((m_1+1)v\big)^{\f{m_2-m_1}{m_1+1}}\,.
\ee
We are now exactly in the case \eqref{case 3 for U} discussed above. There is a single homothetic Killing vector, and an $\sl(2,\R)$ subalgebra of $\cA$ survives, spanned by the charges $(C_1,V_{11},\cQ_0)$, if and only if
\be\label{power_condition}
2+ n_1 +n_2+ m_1 +m_2 + n_1 m_2 + n_2 m_1 =0\,.
\ee
It turns out that this equation is satisfied for the potentials of the Bianchi models VIII and IX, but not for the Bianchi models IV and VII.

In summary, we have shown that the full algebra $\cA$ \eqref{A algebra} can be obtained for the models (III, V, VI, KS), while for the models (VIII, IX) only the conformal subalgebra survives. The Bianchi models IV and VII are particular in the sense that they cannot be assigned an $\sl(2,\R)$ algebra of phase space functions with the construction presented here. This deserves more investigation. A classical change of variable does not alter this conclusion, as our analysis is completely covariant on the field space, however a different choice of internal metric (e.g. non diagonal) might change the results. 

\subsubsection{FLRW cosmology with scalar field}

In the case of FLRW cosmology with curvature $k$, we only have a single scale factor $a$, so the field space is one-dimensional and \eqref{1d case} easily applies. This also extends to the case of a non-vanishing cosmological constant, where it reproduces the results of \cite{Achour:2021lqq}. On the other hand, without any matter content the dynamics itself is trivial.

Let us therefore consider the addition of a scalar field $\Phi$, and apply the above formalism to the two-dimensional field space parametrized by the coordinates $\Phi$ and $a$. With the line element
\be
\de s^2_{\rm{FLRW}} = -N^2 \de t^2 + a(t)^2 \left (\f{\de r^2}{1-k r^2} + r^2 (\de \theta^2+\sin^2 \theta\, \de \phi^2)\right )\,,
\ee
the symmetry-reduced action reads
\be
\cS_{\rm{FLRW}}= \cV_0 \int \de t\left [\f{1}{2N}\left(a^3 \dot \Phi^2 -\f{3 a \dot a^2}{4\pi G}\right) + N\f{3 k a}{8 \pi G}  \right ]\,, \q\q \cV_0 = \int_\Sigma \de^3x\f{r^2 \sin \theta}{\sqrt{1-k r^2}}\,.
\ee
This reduced action is of the same form as the others treated above in this article, the supermetric and the potential beeing
\be
\de s^2_\rm{mini} = \f{\cV_0 a^3 }{N} \de \Phi^2 -\f{3 \cV_0 a}{4 \pi G N} \de a^2\,, \q\q \tilde{U} = -\f{3  k a\cV_0 }{8 \pi G}\,.
\ee
Along the lines of the above discussion in the case of the Bianchi models, we choose the lapse which sets the potential to a constant $N\tilde{U}=U_0$, i.e.
\be
N = -\f{8 \pi \cV_0}{3 G a }\,,\q\q U_0  = -k \f{\cV_0^2}{G^2}\,,
\ee
and find a map to the null coordinates \eqref{Mink_de_s} given by
\be
\label{FLRW_conf}
\begin{array}{rlcrl}
 u \!\!\!&= \df{3}{16 \pi}  a^2 e^{2 \kappa \Phi/3}\,,&\q\q&p_u\!\!\!&=\df{4\pi}{3\kappa}\df{ap_a\kappa+3p_\Phi}{a^2}e^{ -2 \kappa \Phi/3}\,,\\[9pt]
 v \!\!\!&= \df{3}{16 \pi} a^2 e^{ -2 \kappa \Phi/3}\,,&\q\q&p_v\!\!\!&=\df{4\pi}{3\kappa}\df{ap_a\kappa-3p_\Phi}{a^2}e^{ 2 \kappa \Phi/3}\,, \\
\end{array}
\ee
with $\kappa^2 = 12 \pi G$. The charges and their algebra are found by mapping $\cA$ under this transformation.

\section{Three-dimensional field space geometries}
\label{sec:Bianchi I II}

We now briefly discuss an application of the framework to three-dimensional field spaces, which appear e.g. in the Bianchi I and II models. Using the line element \eqref{bianchi II}, we find that the reduced action for these Bianchi models takes the form
\be
\cS=-\f{\cV_0}{G}\int\de t\left[\f{2}{N}\big(a\dot{b}\dot{c}+b\dot{a}\dot{c}+c\dot{a}\dot{b}\big)+\eps\f{N}{2}\f{a^2}{bc}\right]\,,
\ee
where Bianchi I has a vanishing potential with $\eps=0$, and Bianchi II has $\eps=1$. The three-dimensional field space metric is diagonalized by the change of (field space) coordinates
\be
a=e^{\sqrt{2}\,u-2w}\,,\q\q b=e^{v+w}\,,\q\q c=e^{\sqrt{2}\,u-2v}\,,
\ee
leading to
\be
\cS
&=-\f{\cV_0}{G}\int\de t\left[\f{4}{N}e^{2\sqrt{2}\,u-v-w}\big(\dot{u}^2-\dot{v}^2-\dot{w}^2\big)+\eps\f{N}{2}e^{2\sqrt{2}\,u+v-7w}\right]\cr
&=-\f{\cV_0}{G}\int\de t\left[\f{1}{2N}\tilde{g}_{\mu\nu}\dot{q}^\mu\dot{q}^\nu+\eps\f{N}{2}e^{2\sqrt{2}\,u+v-7w}\right]\,,
\ee
where we denote the new coordinates by $q^\mu=(u,v,w)$ even though these are not null coordinates as in the two-dimensional case. The field space metric $\tilde{g}_{\mu\nu}$ is not flat and, in the case of Bianchi II, we have moreover to take into account the presence of the potential.

\subsection{Bianchi I}

In the case of the Bianchi I model, we have a free theory with $\eps=0$. We identify two interesting choices of lapse, which both lead to an $\sl(2,\R)$ symmetry algebra.

A first possibility is to choose $N=e^{2\sqrt{2}\,u-v-w}$, so that the conformally-rescaled metric $g_{\mu\nu}$ is that of $(2+1)$-dimensional Minkowski space. The metric then admits six exact Killing vectors corresponding to the six Poincaré isometries. They are given by
\be
\xi=t^\mu\partial_\mu+{\eps^\mu}_{\nu\rho}\partial_\mu q^\nu\ell^\rho\,,
\ee
where $t=(t_1,t_2,t_3)$ are translations and $\ell=(\ell_1,\ell_2,\ell_3)$ Lorentz transformations. The metric also admits an homothetic Killing vector, corresponding to a dilatation with
\be
\xi_\rm{d}=u\partial_u+v\partial_v+w\partial_w\,,\q\q\lambda_{\rm{d}}=2\,,\q\q C_\rm{d}=u p_u + v p_v+ w p_w\,,
\ee
which generalizes the scaling symmetry \eqref{2d dilatation} of the two-dimensional models to the three-dimensional case. Starting with these 7 conformal Killing vectors, we follow the same procedure as above and define the quantities $C_{(i)}=p_\mu\xi^\mu_{(i)}$ and $V_{(ij)}=g_{\mu\nu}\xi^\mu_{(i)}\xi^\nu_{(j)}$. There are 7 non-trivial $C_{(i)}$'s and 10 non-trivial $V_{(ij)}$'s given by
\be
\begin{pmatrix}
p_u\\
p_v\\
p_w\\
vp_w-wp_v\\
up_w+wp_u\\
-up_v-vp_u\\
u p_u + v p_v+ w p_w
\end{pmatrix}\in C_{(i)}\,,\q\q
\begin{pmatrix}
u\\
v\\
w\\
uv\\
uw\\
vw\\
u^2-v^2\\
u^2-w^2\\
v^2+w^2\\
u^2-v^2-w^2
\end{pmatrix}
\in V_{(ij)}\,. \label{BianchiI_alg}
\ee
Computing the triply-iterated bracket of these $V_{(ij)}$'s with $H$ reveals that $\dddot{V}_{ij}=0$, in agreement with equation \eqref{triple dot V Riemann} and the fact that the field space geometry is flat. In order to have phase space functions which form a closed algebra with $C_{(i)}$, $V_{(ij)}$ and $H$, we need however to include six new functions produced by the time evolution $\dot{V}_{(ij)}$, as well as another six new functions produced by $\ddot{V}_{(ij)}$. These are given explicitly by
\be
\begin{pmatrix}
up_u+vp_v\\
up_u+wp_w\\
vp_v+wp_w\\
up_v-vp_u\\
up_w-wp_u\\
vp_w+wp_v\\
\end{pmatrix}
\in\dot{V}_{(ij)}\,,\q\q
\begin{pmatrix}
p_up_v\\
p_up_w\\
p_vp_w\\
p_u^2-p_v^2\\
p_u^2-p_w^2\\
p_v^2+p_w^2\\
\end{pmatrix}
\in\ddot{V}_{(ij)}\,.
\ee
At the end of the day, we find $7+10+1+6+6=30$ generators forming a closed algebra. There are in particular three generators forming an  $\sl(2,\R)$ algebra. This symmetry algebra is spanned by
\be
C_\rm{d}=up_u+vp_v+wp_w\,,\q\q V_\rm{dd}=u^2-v^2-w^2\,,\q\q H=p_u^2-p_v^2-p_w^2\,,
\ee
which is in agreement with the results of \cite{BenAchour:2019ywl} concerning the existence of an $\sl(2,\R)$ algebra in the case of the Bianchi I model. We should note that the 30 phase space functions given above are not all independent: they satisfy linear as well as quadratic dependency relations. For instance, one recovers $V_\rm{dd}$ as a linear combination of three other elements of $V_{(ij)}$. Similarly, one recovers $C_\rm{d}$ as a linear combination of three other elements of $\dot{V}_{(ij)}$, and finally one recovers $H$ as a linear combination of three other elements of $\ddot{V}_{(ij)}$. These linear and quadratic dependency relations are obvious from the fact that $C_{(i)}$ and $V_{(ij)}$ contain the phase space variables $(u,v,w,p_u,p_v,p_w)$, which can therefore be combined in order to reproduce all the other 24 generators. The fact that the above 30 functions have closed brackets is nonetheless a non-trivial statement.

Alternatively, one could choose the lapse $N=1$, and work with the non-flat field space metric $g_{\mu\nu}=e^{2\sqrt{2}\,u-v-w}\text{diag}(1,-1,-1)$. This metric admits three homothetic vectors given by
\be
\xi=\f{\c_1}{2\sqrt{2}}\partial_u+\c_2\partial_v+\c_3\partial_w\,,\q\q\lambda=\c_1-\c_2-\c_3\,,
\ee
where the $\c_i$'s are constants (which can evidently be chosen so as to obtain true Killing vectors). Out of the six possible observables $V_{(ij)}$ obtained by contracting these homothetic vectors, we find that only three are non-trivial, and furthermore all proportional to $V=e^{2\sqrt{2}\,u-v-w}=abc$, which is simply the 3d volume of the spatial slices. This function forms an $\sl(2,\R)$ algebra together with $H$ and the function $C=2\sqrt{2}\,p_u+p_v+p_w$, which is the sum of the three $C_{(i)}$'s and defines the isotropic dilatation generator. As expected, this is simply a canonical transformation of the algebraic structure found with the previous choice of lapse. This $\sl(2,\R)$ algebra controls the evolution of the isotropic volume $V=abc$, whose speed is given by the dilatation generator $C$ and acceleration by the Hamiltonian $H$.

\subsection{Bianchi II}

Finally, we turn to the subtler case of the Bianchi II model, with both a non-trivial potential and an a priori non-flat field space geometry. In order to treat this case, we choose $N=2/e^{2\sqrt{2}\,u+v-7w}$ so as to set the potential to a constant. Note that here, for simplicity's sake, we do not include the dimensional factors in the lapse, so the value of the Hamiltonian will be shifted simply by $1$. The conformally-rescaled metric then admits one true and two homothetic Killing vectors with
\bsub\label{HKV_II}
\be
\xi_1=&\;\partial_v &\quad \lambda_1=&\;0 \quad&C_1 &=p_v\,,  \\
\xi_2=&\;\partial_u & \lambda_2=&\;4\sqrt{2}&C_2 &=p_u\,,  \\
\xi_3=&\;\partial_w & \lambda_3=&\;-8 &C_3 &=p_w \,.
\ee
\esub
All three $V_{(i)}$'s are proportional to $V=e^{4\sqrt{2}\,u-8w}=a^4$. Together with the shifted Hamiltonian $\cQ_0=H-1$, this leads to the charge algebra
\bsub
\bg
\lb C_{(i)},C_{(j)}\rb=0\,\q\q\lb C_1,\cQ_0\rb=0=\lb C_1,V\rb\,,\q\q\lb V,\cQ_0\rb=8\big(\sqrt{2}C_2+2C_3\big)\,,\\
\lb C_2,\cQ_0\rb=4\sqrt{2}\cQ_0\,,\q\lb C_3,\cQ_0\rb=-8\cQ_0\,,\q\lb C_2,V\rb=-4\sqrt{2}V\,,\q\lb C_3,V\rb=8V\,,
\eg
\esub
which therefore also contains an $\sl(2,\R)$ spanned by $C\coloneqq\sqrt{2}C_2+2C_3$, the volume $V$, and $\cQ_0$. This algebra encodes the evolution of the scale factor $a$.

\section{Perspectives}

In this paper we have studied the symmetry structure of gravitational minisuperspaces. For this, we have considered reductions of general relativity to one-dimensional models, for which the variable components of a given space-time metric ansatz depend on a single coordinate, chosen as the time coordinate. In that case, the Einstein--Hilbert action reduces to mechanical systems. Imposing that the metric ansatz satisfies the ADM vector constraint ensures that the equations of motion resulting from the reduced action are equivalent to the full Einstein equations for the original line element.

We have based our study of the symmetries on the observation that the dynamics is described by a Lagrangian of the form
\be\tag{\ref{mini_lagrang}}
\cL =\f{1}{2} g_{\mu\nu} \dot{q}^\mu \dot{q}^\nu - U (q)\,.
\ee
Here the variables $q_{\mu}$ are components of the space-time line element, i.e. the dynamical fields of the minisuperspace models. They depend on the time coordinate $t$ and this Lagrangian describes their evolution in time. It depends on a metric in field space, or  \textit{supermetric}, $g_{\mu\nu}$. The isometries of this supermetric and the scaling properties of the potential $U$ are directly connected to the symmetries of the mechanical model. In particular, once we identify a homothetic vector field $\xi$ satisfying the conditions
\be
\pounds_\xi g_{\mu\nu}   =\lambda  g_{\mu\nu} \,,  
\q\q
\pounds_\xi  U   =-\lambda U\,,
\ee
then the quantity $\cQ=\xi^\mu p_\mu - t \lambda H$ is conserved and, in turn, it generates a symmetry of the system. However, the non-covariance of the conformal properties of the supermetric and the potential under (field-dependent) changes of lapse render a systematic analysis quite involved. This subtlety reflects the deeper fact that the conserved charges associated with these minisuperspace symmetries are local only with respect some particular choices of clocks, while they might depend on the history of the system through a non-local factor for a more general time coordinate \cite{francesco-thesis}.

The majority of cases of interest (the black hole interior and some Bianchi models) fall in the class of two-dimensional field spaces, where we can always pick coordinates in which the supermetric is conformally flat. The advantage of this observation is twofold. On the one hand, we can  easily handle the question of the existence of homothetic killing vectors, and the related choice of lapse, and on the other hand, it allows us to determine the presence of charges that are quadratic in time. These are built for flat supermetrics starting from the scalar products between vector fields, $V_{(ij)} = \xi_{(i)\mu} \xi^\mu_{(j)}$. Thanks to the study of two-dimensional superspaces, we were able to identify the existence of an 8-dimensional algebra $\cA=\big(\sl(2,\R) \oplus\R\big)\loplus\mathfrak{h}_2$ for FLRW cosmology with a scalar field, Kantowski--Sachs cosmologies and the Bianchi models III, V and VI. Therein the $\sl(2,\R)$ sector is a generalization of the CVH algebra originally found in flat cosmologies. For the Bianchi VIII and XI models, only the CVH sector survives, while for the models IV and VII we find that the non-trivial potential spoils the construction and that our procedure does not produce any algebra. The reason behind this peculiar property of the Bianchi models IV and VII is still unclear.

We have also included a brief study of the Bianchi I and II models, which represent two examples of three-dimensional superspaces. For the Bianchi I model, we find an algebra with 30 generators \eqref{BianchiI_alg}, while Bianchi II leads to a four-dimensional algebra, still containing an $\sl(2,\R)$ subalgebra.

Despite their apparent simplicity, the minisuperspaces contain a rich symmetry structure. However, a clear understanding of the origin and physical role of this structure is still needed. For black holes and cosmology, a subtle relationship with the spatial boundaries and the scaling properties of the model have already been unravelled and studied \cite{francesco-thesis, Geiller:2020xze, BenAchour:2019ufa}, and a possible consequence on perturbation theory has been recently pointed out \cite{BenAchour:2022uqo}. It would be interesting to see to what extent this feature generalizes to the other Bianchi models.

For flat cosmologies and black holes, the infinite-dimensional extension of the symmetry group gives a solution generating tool, that allows to turn on an effective cosmological constant or a scalar field \cite{Achour:2021dtj, BenAchour:2020xif, Geiller:2021jmg}. The infinite-dimensional group acts as a rescaling of the coupling constants of the theory, in a manner reminiscent of a renormalization group flow. A generalization to the Bianchi models is an ongoing work and might help us obtain a better understanding of these symmetry structures.

Finally, the simplicity of the mechanical models allows for a straightforward quantization, where the conserved charges naturally provide a (full) set of Dirac observables \cite{francesco-thesis,Sartini:2021ktb}, that can be straightforwardly quantized by means of the representation theory of the symmetry group. The requirement of protection of the symmetry gives also a valuable tool to discriminate between different effective dynamics \cite{BenAchour:2018jwq,BenAchour:2019ywl,Geiller:2020xze,Sartini:2021ktb}. It also opens the door towards a possible emerging description of spacetime itself, out of the symmetry group \cite{Aldaya:1999yn}.

In the end, a deeper comprehension of the origin of the minisuperspace symmetries could enlighten many aspects of the holographic properties of general relativity, and hopefully help understand the proper notion of observables for a quantum theory of gravity.

\newpage

\section*{Appendices}
\addcontentsline{toc}{section}{Appendices}
\renewcommand{\thesubsection}{\Alph{subsection}}
\setcounter{section}{1}
\counterwithin*{equation}{subsection}
\renewcommand{\theequation}{\Alph{subsection}.\arabic{equation}}

\subsection{ADM approach to the CVH algebra}
\label{ADM}

We have explained in the core of the article how homogeneous minisuperspace models lead to mechanical Lagrangians of the form \eqref{mini_lagrang}, and used the field space geometry of these Lagrangians to characterize the presence of a CVH algebra and possible extensions thereof. A natural question is therefore that of the relationship between this approach and the standard ADM field theory approach, in which one deals directly with the geometry of the $3+1$ foliation instead of considering the geometry of an auxiliary field space. The spacetime ADM approach enables to show that a CVH algebra always exists if the three-dimensional Ricci scalar of the spatial slice is vanishing, regardless of the value of the cosmological constant. Here we recall this calculation.

Let us denote the spatial indices by $\alpha,\beta,\gamma,\dots$. We consider the standard ADM approach where the canonical variables are the spatial metric $q_{\alpha\beta}$ and its momentum $p^{\alpha\beta}=\sqrt{q}(Kq^{\alpha\beta}-K^{\alpha\beta})$. We denote $q=\det(q_{\alpha\beta})$. The smeared Hamiltonian constraint is
\be\label{ADM hamiltonian}
H=\int_\Sigma\de^3x\,N\cH\,,\q\q\cH=-\sqrt{q}\big(R^{(3)}-2\Lambda\big)-\f{1}{\sqrt{q}}Q_{\alpha\beta\gamma\delta}p^{\alpha\beta}p^{\gamma\delta}\,,
\ee
where $R^{(3)}$ is the Ricci scalar of the three-dimensional slice and
\be
Q_{\alpha\beta\gamma\delta}\coloneqq\f{1}{2}q_{\alpha\beta}q_{\gamma\delta}-q_{\alpha\gamma}q_{\beta\delta}\,.
\ee
From this expression of the ADM Hamiltonian, it is immediate to identify which term gets mapped to the terms of the field space Hamiltonian derived from the Lagrangian \eqref{mini_lagrang}. Under the symmetry reduction to homogeneous minisuperspaces we have indeed that
\be
H=\int_\Sigma\de^3x\,N\cH\q\stackrel{\text{sym. red.}\vphantom{\f{1}{2}}}{\Longrightarrow}\q\cV_0N\left(\f{1}{2}\tilde{g}^{\mu\nu}p_\mu p_\nu+\tilde{U}\right)\,,
\ee
so $R^{(3)}-2\Lambda$ in the ADM Hamiltonian becomes the potential $\tilde{U}$ of the superspace Hamiltonian, while the deWitt supermetric $Q_{\alpha\beta\gamma\delta}$ becomes the field space metric $\tilde{g}_{\mu\nu}$. It should be noted however that this identification, although unambiguous, cannot be made much more explicit. It is really just an identification. This is the reason for which exact calculations on the side of ADM or of the superspace formulation cannot easily be compared to one another.

An example illustrating this is the proof that the CVH algebra always exists when $R^{(3)}=0$. This proof is possible in the ADM language, but not in the superspace formulation. In order to see this, we assume that the vector constraint has been solved by the requirement of homogeneity, although here we keep working in a mixed framework where we keep all the spatial integrals explicit (this is because there might be a spatial dependency in the frames $e^i_\alpha(x)$ introduced in \eqref{minisuperspace}, although these are not the dynamical variables of the homogeneous theory). The Poisson bracket is $\lb q_{\alpha\beta}(x),p^{\gamma\delta}(y)\rb=\delta^\gamma_{(\alpha}\delta^\delta_{\beta)}\delta^d(x,y)$. In order to obtain the CVH algebra and compute Poisson brackets, we define the smeared Hamiltonian and a volume variable by
\be
H=\int_\Sigma\de^3x\,N\cH\,,
\q\q
V=\int_\Sigma\de^3x\, q\,,
\ee
and we choose the lapse to be $N=1/\sqrt{q}$. With this choice of lapse it is then immediate to compute
\be
C\coloneqq\lb V,H\rb=-\f{1}{4}\int_\Sigma\de^3x\,p^{\alpha\beta}q_{\alpha\beta}\,,
\ee
and to show that we have the closed brackets
\be
\lb V,C\rb=-18V\,,\q\q
\lb C,H\rb=-3H+6\Lambda\int_\Sigma\de^3x,
\ee
where in the last bracket the integral (over a finite region with fiducial cut-offs) produces a numerical factor $L_0$ which plays the role of shift in the Hamiltonian when defining $\cQ_0=H-L_0$.

As announced, this simple calculation shows that it is always possible to chose the lapse so as to obtain a closed algebra between $(C,V,\cQ_0)$ when $R^{(3)}=0$. However, this calculation carried out in the ADM formulation cannot be reproduced in the field space formulation since there it is not possible to write an analogue of $q=\det(q_{\alpha\beta})$. Of course, for any minisuperspace model for which $R^{(3)}=0$ the field space calculation will also lead to an $\sl(2,\R)$ algebra formed by $(C,V,\cQ_0)$, and we are guaranteed by the above ADM calculation that this will \textit{always} work, but this has to be computed on a case by case basis.

\subsection{Triad decomposition}
\label{triad_decomp}

The viewpoint we have taken in section \ref{sec:homogeneous} is to define a minisuperspace as a manifold sliced in such a way that the line element separates into a temporal (i.e. orthogonal to the slice) and spatial (i.e. tangential to the slice) dependence as in \eqref{minisuperspace}. This implies that the trace of the extrinsic curvature and the ADM kinetic term depend only on the internal metric $\gamma_{ij}$, up to the determinant of the spatial triad. This latter, once integrated out, gives the volume $\cV_0$ of the fiducial cell. More precisely, we have
\be
K_{\alpha\beta} = \f{1}{2N} \dot q_{\alpha\beta}\q \Rightarrow\q \sqrt{q} \left (K^2 - K^{\alpha\beta}K_{\alpha\beta}\right ) = \f{ |e| \sqrt{\gamma}}{4 N^2} \left ((\gamma^{ij} \dot \gamma_{ij})^2+ \dot \gamma_{ij} \dot \gamma^{ij}\right )\,,
\ee
where $e\coloneqq \det(e_\mu^i)$ and $\gamma\coloneqq\det(\gamma_{ij})$.

In order to analyse the three-dimensional curvature it turns out to be useful to introduce the spin connection
\be
\omega^{ij}_{\alpha}
&\coloneqq e^{\beta i} \partial_{[\alpha}^{\phantom{j}} e_{\beta]}^j -e^{\beta j} \partial_{[\alpha}^{\phantom{i}} e_{\beta]}^i - e^{\delta i} e^{\gamma j} e^k_\alpha \partial_{[\delta} e_{\gamma] k}\cr
&\phantom{:}=\gamma^{\ell i}\big(e^{\beta}_\ell \partial_{[\mu}^{\phantom{j}} e_{\beta]}^j \big)- \gamma^{\ell j} \big(e^{\beta}_\ell \partial_{[\mu}^{\phantom{i}} e_{\beta]}^i\big) -  \gamma_{\ell k}  \gamma^{n i}  \gamma^{m j}\big (e^{\delta}_n e^{\gamma}_m  e^\ell_\mu \partial_{[\delta}^{\phantom k} e_{\gamma]}^k\big)\cr
&\phantom{:}=\gamma^{\ell [i}\big (e^{\beta}_\ell \partial_{\mu}^{\phantom{j}} e_{\beta}^{j]} \big)- \gamma^{\ell [i}\big (e^{\beta}_\ell \partial_{\beta}^{\phantom{j}} e_{\mu}^{j]}\big) -  \gamma_{\ell k}  \gamma^{n [i}  \gamma^{j] m}\big (e^{\delta}_n e^{\gamma}_m  e^\ell_\mu \partial_{\delta}^{\phantom k} e_{\gamma}^k\big)\,.
\ee
One can see that this expression does not simply split into the product of quantities depending separately on the triad and on the internal metric. The same happens for the curvature, which is given by
\be
F^{ij}_{\alpha\beta} \coloneqq 2 \left (\partial_{[\alpha}^{\phantom j} \omega^{ij}_{\beta]} + \gamma_{k \ell} \omega^{i\ell}_{[\alpha} \omega^{kj}_{\beta]} \right ) \,,\q\q
R^{(3)} =\f{1}{|e|} \epsilon^{\alpha\beta\delta}\epsilon_{ijk} e^k_\delta F^{ij}_{\alpha\beta}\,.
\ee
The vector constraint \eqref{vector_constr} depends explicitly on the spin connection and reads
\be
\cH^\alpha
&= 2 D_\beta \left (K q^{\alpha\beta}-K^{\alpha\beta} \right )\cr
&= \f{1}{N} D_\beta \left (\gamma^{k \ell} \dot \gamma_{k \ell} e^\alpha_i\, e^\beta_j\, \gamma^{ij}+ e^\alpha_i\, e^\beta_j\, \dot \gamma^{ij} \right )\cr
&= \f{1}{N} D_\beta \big(e^\alpha_i e^\beta_j\big) \left (\gamma^{k \ell} \dot \gamma_{k \ell} \, \gamma^{ij}+ \dot \gamma^{ij} \right )\cr
&=-\f{\pi^{ij}}{\cV_0 \sqrt{\gamma}} \left (\partial_\beta\big(e^\alpha_i\, e^\beta_j\big) + \big(e^{\alpha}_k e_{\sigma \ell} \omega^{k\ell}_\beta + e^\alpha_k \partial_\beta e^k_\sigma\big) e^\sigma_i\, e^\beta_j + \big(e^{\beta}_k e_{\sigma \ell} \omega^{k\ell}_\beta + e^\beta_k \partial_\beta e^k_\sigma\big) e^\alpha_i\, e^\sigma_j \right ) \cr
&=-\f{\pi^{ij}}{\cV_0 \sqrt{\gamma}} \left (\cancel{\partial_\beta\big(e^\alpha_i\, e^\beta_j\big)} + e^{\alpha}_k \gamma_{i\ell}\, e^\beta_j \omega^{k\ell}_\beta -  \cancel{\partial_\beta e^\alpha_i\, e^\beta_j} +  e^\alpha_i\, \gamma_{\ell j} e^{\beta}_k  \omega^{k\ell}_\beta -  \cancel{\partial_\beta e^\beta_j\, e^\alpha_i} \right )\,,
\ee
where we have introduced the momenta
\be
\pi^{ij}\coloneqq\f{\delta \cL_{\rm {ADM}}}{\delta \dot \gamma_{ij}} = - \cV_0 \f{ \sqrt{\gamma}}{N} \left ((\gamma^{k\ell} \dot \gamma_{k\ell})\,  \gamma^{ij} + \dot \gamma^{ij}\right ) \,,
\ee
which are the conjugate momenta to the internal metric in the ADM form of the action \eqref{Einstein_mini}. The vanishing of the vector constraint is therefore equivalent to the requirement that $\pi_i^j e^{(\alpha}_k e^{\beta)}_j \omega^{ik}_\beta$=0. 

\subsection{Properties of homothetic Killing vectors}
\label{HKV_prop}

In the main text we have used some properties of the conformal Killing vectors, such as the fact that they are solutions to the geodesic deviation equation. We give here a proof of this statement as well as other properties of the conformal Killing vectors.

Given an invertible metric $g_{\mu\nu}$, we recall that the homothetic Killing vectors are defined by the property
\be
\label{HKV_append} \nabla_\mu \xi_\nu + \nabla_\nu \xi_\mu \coloneqq  2  \nabla_{(\mu} \xi_{\nu)} =\lambda g_{\mu\nu}\,,\q\q \lambda=\rm{const}.
\ee
We then have the following result:
\begin{theorem}
Any homothetic killing vector $\xi$ is a solution of the geodesic deviation equation \cite{Caviglia:1982aa}:
\be
\label{geodesic deviation}
p^\mu p^\nu \nabla_\mu \nabla_\nu \xi_\rho = - R_{\rho\mu\sigma\nu}p^\mu p^\nu \xi^\sigma\,,
\ee
where $p^\mu$ is the tangent vector to a geodesic (i.e. a curve describing a solution of the equations of motion). This vector satisfies the property $p^\mu \nabla_\mu p^\nu =0$.
\end{theorem}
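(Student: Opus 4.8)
The plan is to exploit the fact that, because $\lambda$ is constant and the metric is covariantly constant, the homothetic condition \eqref{HKV_append} behaves---after one derivative---exactly like the ordinary Killing equation. Writing $\nabla_\mu\xi_\nu=\tfrac{\lambda}{2}g_{\mu\nu}+F_{\mu\nu}$ with $F_{\mu\nu}=\nabla_{[\mu}\xi_{\nu]}$ antisymmetric, the symmetric part $\tfrac{\lambda}{2}g_{\mu\nu}$ is annihilated by any covariant derivative. Hence the key relation $\nabla_\alpha\nabla_\mu\xi_\nu=-\nabla_\alpha\nabla_\nu\xi_\mu$ holds, i.e. the second derivative of $\xi$ stays antisymmetric in its last two indices, just as for a true Killing vector. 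This single observation is the heart of the matter: for a generic conformal Killing vector with non-constant $\lambda$ the symmetric part would instead contribute $\nabla_\alpha\lambda$ terms and the argument would break. This is why it is essential that we restrict to \emph{homothetic} vectors.

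First I would write down the three cyclic Ricci identities for $\xi_\rho$,
\[
[\nabla_\mu,\nabla_\nu]\xi_\rho=R_{\mu\nu\rho}{}^\sigma\xi_\sigma,
\]
together with its two cyclic permutations in $(\mu,\nu,\rho)$. Then I would form the combination (first) $-$ (second) $+$ (third) and substitute the antisymmetry relation above to collapse the six second-derivative terms. Four of them cancel in pairs while the remaining two reinforce, leaving exactly $2\nabla_\mu\nabla_\nu\xi_\rho$ on the left-hand side.

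On the right-hand side, the three curvature terms $R_{\mu\nu\rho}{}^\sigma-R_{\nu\rho\mu}{}^\sigma+R_{\rho\mu\nu}{}^\sigma$ are reduced using the first Bianchi identity $R_{\mu\nu\rho}{}^\sigma+R_{\nu\rho\mu}{}^\sigma+R_{\rho\mu\nu}{}^\sigma=0$, which gives $-2R_{\nu\rho\mu}{}^\sigma\xi_\sigma$. Dividing by two yields the closed-form second-derivative identity $\nabla_\mu\nabla_\nu\xi_\rho=-R_{\nu\rho\mu}{}^\sigma\xi_\sigma$, valid for any homothetic Killing vector.

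Finally I would contract with $p^\mu p^\nu$. Since the two factors of $p$ are symmetric under $\mu\leftrightarrow\nu$, I can use the pair-exchange and antisymmetry properties of the Riemann tensor, together with a relabelling of the symmetric dummy indices, to recast $p^\mu p^\nu R_{\nu\rho\mu\sigma}$ into $p^\mu p^\nu R_{\rho\mu\sigma\nu}$, matching the form in \eqref{geodesic deviation}. The geodesic property $p^\mu\nabla_\mu p^\nu=0$ is not needed to establish the pointwise tensor identity, but it guarantees that $p^\mu p^\nu\nabla_\mu\nabla_\nu$ is the genuine iterated derivative of $\xi_\rho$ along the geodesic, so that the result is indeed the geodesic deviation equation. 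I expect the only real hazard to be bookkeeping: keeping the Riemann index placement and signs consistent through the Bianchi reduction and the final index reshuffle. The conceptual content lies entirely in the remark that constancy of $\lambda$ renders the homothetic case identical to the Killing case after one covariant derivative.
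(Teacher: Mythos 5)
Your proof is correct, and it takes a genuinely different route from the paper's. The paper works directly under the contraction with $p^\mu p^\nu$: it differentiates the homothetic equation once to get $\nabla_\mu\nabla_\nu\xi_\rho=-\nabla_\mu\nabla_\rho\xi_\nu$, swaps one pair of derivatives to produce the single curvature term, and then kills the leftover term by noting that $p^\mu p^\nu\nabla_\rho\nabla_\mu\xi_\nu=p^\mu p^\nu\nabla_\rho\nabla_{(\mu}\xi_{\nu)}=\tfrac{\lambda}{2}\,p^\mu p^\nu\nabla_\rho g_{\mu\nu}=0$ --- so the Bianchi identity never enters, and only the contracted statement is ever established. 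You instead prove the stronger, uncontracted transport identity $\nabla_\mu\nabla_\nu\xi_\rho=-R_{\nu\rho\mu}{}^{\sigma}\xi_\sigma$ by the classical cyclic combination of three Ricci identities reduced with the first Bianchi identity, and only then contract with $p^\mu p^\nu$; your index bookkeeping checks out (the pair cancellations leave $2\nabla_\mu\nabla_\nu\xi_\rho$, and $p^\mu p^\nu R_{\nu\rho\mu\sigma}=p^\mu p^\nu R_{\rho\mu\sigma\nu}$ follows from the two pair antisymmetries plus relabelling the symmetric dummies). Both arguments hinge on exactly the same key fact, which you state explicitly and the paper uses implicitly: since $\lambda$ is constant and $\nabla_\alpha g_{\mu\nu}=0$, the decomposition $\nabla_\mu\xi_\nu=\tfrac{\lambda}{2}g_{\mu\nu}+F_{\mu\nu}$ has covariantly constant symmetric part, so after one derivative a homothetic vector behaves exactly like a true Killing vector. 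What your route buys is the pointwise identity itself --- the homothetic analogue of the Killing transport equation, useful beyond this theorem (e.g.\ it shows $\xi$ is determined along a curve by the data $(\xi,\nabla\xi)$ at a point); what the paper's route buys is brevity, reaching the contracted statement in three lines without invoking the Bianchi identity. Your closing remark is also accurate and slightly sharper than the paper's presentation: the geodesic equation $p^\mu\nabla_\mu p^\nu=0$ plays no role in the tensor identity and is needed only to interpret $p^\mu p^\nu\nabla_\mu\nabla_\nu\xi_\rho$ as the second derivative of $\xi$ along the curve, i.e.\ to read the result as the geodesic deviation equation.
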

\begin{proof}
We start from the definition of the conformal Killing vectors to get
\be
p^\mu p^\nu \nabla_\mu \nabla_\nu \xi_\rho 
&= - p^\mu p^\nu \nabla_\mu \nabla_\rho \xi_\nu + \lambda p^\mu p^\nu \cancel{\nabla_\mu g_{\nu \rho}}\cr
&= - p^\mu p^\nu  R_{\nu\sigma\mu\rho} \xi^\sigma +  p^\mu p^\nu \nabla_\rho \nabla_\mu \xi_\nu\,,
\ee
where the second line is obtained from the definition of the Riemann tensor as a commutator of covariant derivatives. The last term in this expression is now vanishing when $\xi$ is conformal with constant $\lambda$ since it gives
\be
p^\mu p^\nu \nabla_\rho \nabla_\mu \xi_\nu = p^\mu p^\nu \nabla_\rho \nabla_{(\mu} \xi_{\nu)} = \lambda p^\mu p^\nu \nabla_\rho g_{\mu\nu }=0\,.
\ee
This therefore proves the above statement using the following properties of the Riemann tensor:
\be
p^\mu p^\nu \nabla_\mu \nabla_\nu \xi_\rho  = - p^\mu p^\nu  R_{\nu\sigma\mu\rho} \xi^\sigma=
- R_{\rho\mu\sigma\nu}p^\mu p^\nu \xi^\sigma\,.
\ee
\end{proof}
We now give a few properties of the homothetic Killing vectors and of the phase space functions $C_{(i)}$ and $V_{(ij)}$ constructed out of them. First, they form an algebra under the Lie bracket
\be\label{structure}
[\xi_{(i)},\xi_{(j)}] &= {c_{ij}}^k \xi_{(k)}\,,\q\q
[\xi_{(i)},\xi_{(j)}]^\mu\coloneqq \xi_{(i)}^\nu \nabla_\nu \xi_{(j)}^\mu -\xi_{(j)}^\nu \nabla_\nu \xi_{(i)}^\mu\,,
\ee
and the commutator of two vectors is actually a Killing vector, as one can check explicitly by computing
\be
\cL_{[i,j]} g_{\mu\nu} &= \nabla_ \mu (\xi_{(i)}^\sigma \nabla_\sigma \xi_{(j)\,\nu} -\xi_{(j)}^\sigma \nabla_\sigma \xi_{(i)\,\nu}) + \left (\mu \leftrightarrow \nu \right )\notag\\
&= \nabla_ \mu \xi_{(i)}^\sigma \nabla_\sigma \xi_{(j)\,\nu} + \xi_{(i)}^\sigma \nabla_ \mu \nabla_\sigma \xi_{(j)\,\nu} -\nabla_ \mu \xi_{(j)}^\sigma \nabla_\sigma \xi_{(i)\,\nu} + \xi_{(j)}^\sigma \nabla_ \mu \nabla_\sigma \xi_{(i)\,\nu} + \left (\mu \leftrightarrow \nu \right )\notag\\
&= \nabla_ \mu \xi_{(i)\,\nu} \lambda_{(j)} - \cancel{\nabla_ \mu \xi_{(i)}^\sigma \nabla_\nu \xi_{(j)\,\sigma}}  + \xi_{(i)}^\sigma \nabla_ \mu \nabla_\sigma \xi_{(j)\,\nu} -\nabla_ \mu \xi_{(j)\,\nu} \lambda_{(i)}\cr 
&\pe+ \cancel{\nabla_ \mu \xi_{(j)}^\sigma \nabla_\nu \xi_{(i)\,\sigma}} + \xi_{(j)}^\sigma \nabla_ \mu \nabla_\sigma \xi_{(i)\,\nu} + \left (\mu \leftrightarrow \nu \right )\notag\\
&= \cancel{g_{\mu\nu}\lambda_{(i)} \lambda_{(j)}} + \xi_{(i)}^\sigma R_{\nu\rho\mu\sigma}\xi_{(j)}^\rho - \xi_{(i)}^\sigma \nabla_ \sigma \nabla_\mu \xi_{(j)\,\nu} -\cancel{g_{\mu\nu}\lambda_{(j)}\lambda_{(i)}} \cr
&\pe- \xi_{(j)}^\sigma R_{\nu\rho\mu\sigma}\xi_{(i)}^\rho + \xi_{(j)}^\sigma \nabla_ \sigma \nabla_\mu \xi_{(i)\,\nu}+ \left (\mu \leftrightarrow \nu \right )\notag\\
&= \xi_{(i)}^\rho \xi_{(j)}^\sigma (R_{\nu\mu\sigma\rho} + R_{\mu\nu\sigma\rho})  -\xi_{(i)}^\sigma \nabla_\sigma (g_{\mu\nu}) \lambda_{(j)} + \xi_{(j)}^\sigma \nabla_\sigma (g_{\mu\nu}) \lambda_{(i)}\cr
& =0\,.
\ee
Here we have used \eqref{HKV_append} when going from the second to the third line, and eliminated the antisymmetric terms in $\mu\,,\nu$. We have then used \eqref{HKV_append} once again as well as the definition of the Riemannn tensor, and finally concluded by using the antisymmetry of the Riemann tensor. This result implies that the structure constants of the algebra of homothetic Killing vectors satisfy
\be
{c_{ij}}^k \lambda_{(k)} =0\,.
\ee

Let us now consider the phase space functions $V_{(ij)}=g_{\mu\nu}\xi^\mu_{(i)}\xi^\nu_{(j)}$ and $C_{(i)}=p_\mu\xi^\mu_{(i)}$. First, we have that the third time derivative of the squared vectors gives
\be\label{triple dot V Riemann}
\f{\de^3}{\de t^3}V_{(ij)}&=
p^\mu p^\nu p^\rho \nabla_\mu \nabla_\nu \nabla_\rho \left (\xi_{(i)}^\sigma \xi_{(j)\sigma}\right )\notag\\&=2p^\mu p^\nu p^\rho \nabla_\mu\left ( (\nabla_\nu \xi_{(i)}^\sigma) (\nabla_\rho \xi_{(j)\sigma}) -  \xi_{(i)}^\sigma R_{\sigma\nu\kappa\rho} \xi_{(j)}^\kappa) \right ) \notag\\
&= - 2 p^\mu p^\nu p^\rho \left (2(\nabla_\rho \xi_{(j)}^\sigma) R_{\sigma\mu\kappa\nu} \xi_{(i)}^\kappa +2(\nabla_\rho \xi_{(i)}^\sigma) R_{\sigma\mu\kappa\nu} \xi_{(j)}^\kappa  + \xi_{(i)}^\sigma\xi_{(j)}^\kappa (\nabla_\rho R_{\sigma\mu\kappa\nu})\right).
\ee
This shows in particular that $\dddot{V}_{(ij)}=0$ whenever the Riemann tensor vanishes (this is of course sufficient but not necessary), as in the case of the flat field space geometry discussed in section \ref{sec:2d}.

We are interested in the condition for which the functions $V_{(ij)}$ and $C_{(i)}$ form a closed algebra with the Hamiltonian $H$. In the free case where $H=p_\mu p^\mu/2$ we find
\be
\lb V_{(ij)},H\rb
&=p^\mu \partial_\mu \big( \xi_{(i)}^\nu \xi_{(j)\,\nu} \big)\,\notag\\
&=p^\mu \xi_{(i)}^\nu \nabla_\mu   \xi_{(j)\,\nu} +p^\mu \xi_{(j)}^\nu \nabla_\mu   \xi_{(i)\,\nu}\notag\\
&= \lambda_{(i)} C_{(j)}+\lambda_{(j)} C_{(i)}- p^\mu\left ( \xi_{(i)}^\nu \nabla_\nu   \xi_{(j)\,\mu} +\xi_{(j)}^\nu \nabla_\nu   \xi_{(i)\,\mu}\right ) \,,
\ee
and
\be
\lb V_{(ij)}, C_{(k)}\rb
&=\xi_{(k)}^\mu \partial_\mu  \big( \xi_{(i)}^\nu \xi_{(j)\,\nu} \big)\,\notag\\
&=\xi_{(k)}^\mu \xi_{(i)}^\nu \nabla_\mu   \xi_{(j)\,\nu} +\xi_{(k)}^\mu \xi_{(j)}^\nu \nabla_\mu   \xi_{(i)\,\nu}\notag\\
&= \lambda_{(i)} V_{(jk)}+\lambda_{(j)} V_{(ik)}- \xi_{(k)}^\mu\left ( \xi_{(i)}^\nu \nabla_\nu   \xi_{(j)\,\mu} +\xi_{(j)}^\nu \nabla_\nu   \xi_{(i)\,\mu}\right ) \,,
\ee
which closes iff
\be\label{condition on xi's}
\xi_{(i)}^\nu \nabla_\nu   \xi_{(j)\,\mu} +\xi_{(j)}^\nu \nabla_\nu   \xi_{(i)\,\mu}=\sum_k \alpha^k\xi_{(k)\,\mu}\,,
\ee
for some combination of the vectors on the RHS.

\subsection{Bianchi classification}
\label{Bianchi_metrics}

The appendix gathers all the properties of the Bianchi models which are needed for the study of the phase space symmetry algebra. First, we give a list of the Bianchi line elements which satisfy the vector constraint \eqref{vector_constr}. These are\footnote{To this list we should add the model VI$_{h}$, which has the same symmetry properties as VI$_0$ but whose treatment involves some lengthy expressions. It is reported in \cite{francesco-thesis}.}
\bsub
\be
(\rm{I})&&\de s^2 &= -N^2 \de t^2 + a^2\de x^2 + b^2 \de y^2+c^2\de z^2\,,\\
(\rm{II})&&\de s^2 &= -N^2 \de t^2 + a^2 \big(\de x-z\,\de y\big)^2 + b^2 \de y^2+c^2\de z^2\,,\label{bianchi II}\\
(\rm{III})&&\de s^2 &= -N^2 \de t^2 + a^2 \de x^2 + b^2 L_s^2 \big (\de y^2+\sinh^2 y\, \de \phi^2\big )\,,\\
(\rm{IV})&&\de s^2 &= -N^2 \de t^2 + a^2 L_s^2 \de x^2 + \f{a^4}{b^2}e^{-2x}\de y^2 +b^2 e^{-2x}\big(\de z-x\,\de y\big)^2\,,\\
(\rm{V})&&\de s^2 &= -N^2 \de t^2 + a^2 L_s^2 \de x^2 + \f{a^4}{b^2}e^{-2x} \de y^2 +b^2 e^{-2x} \de z^2\,,\\
(\rm{VI}_0)&&\de s^2 &= -N^2 \de t^2 + a^2 L_s^2 \de x^2 + b^2 \big (e^{-2x} \de y^2 +e^{2x}\de z^2\big )\,,\\
(\rm{VII}_h)&&\de s^2 &= -N^2 \de t^2 + a^2 L_s^2 \de x^2  + b^2 e^{-2hx}\big(\cos x\, \de z-\sin x\, \de y\big)^2\cr
&&&\phantom{=\ -N^2 \de t^2 + a^2 L_s^2 \de x^2 }+\f{a^4}{b^2} e^{-2hx}\big(\cos x \, \de y + \sin x \,\de z\big)^2 \,,\\
(\rm{VIII})&&\de s^2 &= -N^2 \de t^2 + a^2\big(\de x + L_s\cosh y\,\de \phi\big)^2  + L_s^2  b^2\, \big(\de y^2 +\sinh y\,  \de \phi\big)^2 \,,\\
(\rm{IX})&&\de s^2 &=-N^2\de t^2+a^2\big(\de x+L_s\cos\theta\,\de\phi\big)^2+L_s^2b^2\, \big(\de \theta^2 +\sin \theta\,  \de \phi\big)^2 \,.
\ee
\esub
The length scale $L_s$ has been introduced in order to have dimensionless fields. In terms of the decomposition \eqref{minisuperspace}, the fundamental triads corresponding to these line elements are
\bsub
\be
&&&e^1=&&e^2=&&e^3=\\
(\rm{I})\quad&&
&\de x &
& \de y &
& \de z \\
(\rm{II})\quad&&
&\de x-z\,\de y &
& \de y &
& \de z \\
(\rm{III})\quad&&
&\de x &
& L_s\, \de y &
& L_s\, \sinh y\,\de \phi \\
(\rm{IV})\quad&&
& L_s\,\de x &
& e^{-x}\, \de y &
&e^{-x}(\de z- x\,\de y)\\
(\rm{V})\quad&&
&L_s\,\de x&
&e^{-x}\, \de y &
&e^{-x}\, \de z \\
(\rm{VI}_0)\quad&&
&L_s\,\de x&
&e^{-x}\, \de y &
&e^{x}\, \de z \\
(\rm{VII}_h)\quad&&
&L_s\,\de x&
&e^{-hx}(\cos x\, \de y+\sin x\, \de z) &
&e^{-hx}(\cos x\, \de z-\sin x\, \de y)\\
(\rm{VIII})\quad&&
&\de x+ L_s \cosh y\, \de z&
&L_s \de y &
&L_s \sinh y\, \de \phi\\
(\rm{IX})\quad&&
&\de x+ L_s \cos y\, \de z&
&L_s  \de \theta &
& L_s  \sin\theta\, \de \phi
\ee
\esub
One should note that for each triad the line elements given above are not the only solutions to the vector constraint. We have focused here on the diagonal case, (i.e. when the internal metric is diagonal), a more involved analysis is needed if we want to account for all the possible internal degrees of freedom \cite{Ashtekar:1991wa}. The finite volumes of the fiducial cells are $\cV_0 = \f{1}{16 \pi}\int_\Sigma |e|$ and given by
\bsub
\be
(\text{I, II})&& \cV_0=\,&\f{1}{16\pi}L_x L_yL_z &&x\in[0,L_x]\,,y\in[0,L_y]\,, z \in[0,L_z]\,, \\
(\text{III, VIII})&& \cV_0=\,&\f{1}{4}L_x L_s^2\sinh^2 \left (\f{y_0}{2}\right) &&x\in[0,L_x]\,,y\in[0,y_0]\,, \phi \in[0,2\pi]\,, \\
(\text{IV, V})&& \cV_0=\,&\f{1}{16 \pi}L_s L_y L_z e^{-x_0}\sinh x_0 &&x\in[0,x_0]\,,y\in[0,L_y]\,, z \in[0,L_z] \,,\\
(\rm{VI}_0)&& \cV_0=\,&\f{1}{16 \pi}L_s L_y L_z x_0  &&x\in[0,x_0]\,,y\in[0,L_y]\,, z \in[0,L_z] \,,\\
(\rm{VII}_h)&& \cV_0=\,&\f{1}{32 \pi h}L_s L_y L_z (1-e^{-2h x_0}) &&x\in[0,x_0]\,,y\in[0,L_y]\,, z \in[0,L_z]\,, \\
(\rm{IX})&& \cV_0=\,&\f{1}{4}L_x L_s^2  &&x\in[0,L_0]\,,\theta\in[0,\pi]\,, \phi \in[0,2\pi]\,. 
\ee
\esub
The models can be divided into three categories depending on the internal metrics, which are
\bsub
\be
&\text{(I, II)} &&& \gamma_{ij} &= \rm{diag}\left(a^2,b^2,c^2\right)\,,\\
&\text{(III, VI$_0$, VIII, IX)} &&& \gamma_{ij} &= \rm{diag}\left(a^2,b^2,b^2\right)\,,\label{second gamma cat}\\
&\text{(IV, V, VII)} && & \gamma_{ij} &= \rm{diag}\left (a^2,\f{a^4}{b^2},b^2\right)\,.
\ee
\esub
We see that for the Bianchi I and II models the field space is three-dimensional, while for the other Bianchi models it is only two-dimensional.

To the second category \eqref{second gamma cat} we can also add the Kantowski--Sachs cosmology, for which the line element, the triad and the fiducial volume are given by
\bsub
\bg
\de s^2_{\rm{KS}} = -N^2 \de t^2 + a^2 \de x^2 + b^2 L_s^2 \left (\de \theta^2+\sin^2 \theta\, \de \phi^2\right)\,,\label{KS metric}\\
e^1 = \de x\,,\q e^2 = L_s\, \de \theta \,,\q e^3 = L_s\, \sin \theta \, \de \phi \,,\\
\cV_0 = \f{1}{4}L_x L_s^2\,, \q \q x\in[0,L_x]\,,\theta\in[0,\pi]\,, \phi \in[0,2\pi].
\eg
\esub
The KS model does not belong to the Bianchi classification because it does not admit three independent spacelike Killing vectors forming a closed Lie algebra. Appart from this difference it fits entirely in the setup of our discussion.

We now give the expression, for each of the above models, of the potential coming from the minisuperspace reduction of the three-dimensional Ricci scalar. This is
\be
\f{1}{16 \pi G}\int_\Sigma\de^3x\, |e| \sqrt{\gamma} \,R^{(3)} =- \f{\cV_0}{G}N\tilde{U}_\text{model}\,,
\ee
with
\be
\tilde{U}_\text{KS}&=-\df{2a}{L_s^2}=-\df{3^{1/3 }}{L_s^2 \sigma^{4/3}}\df{\tilde u}{\tilde{v}^{1/3}}\,,\\
\tilde{U}_\text{I}&=0\,,\\
\tilde{U}_\text{II}&=\df{a^3}{2bc}\,,\\
\tilde{U}_\text{III}&= \df{2a}{L_s^2}=\df{3^{1/3 }}{L_s^2 \sigma^{4/3}}\df{\tilde u}{\tilde{v}^{1/3}}\,,\\
\tilde{U}_\text{IV}&=\df{6a}{L_0^2}+ \df{b^4}{2 L_0^2 a^3 }=\f{1}{4 L_0^2}(3\tilde u \tilde v)^{1/3}\left (12+16^{\sqrt{3}} 81^{1/\sqrt{3}} (\tilde{v} \sigma^2 /\tilde u)^{4/\sqrt{3}}\right )\,,\\
\tilde{U}_\text{V}&=\df{6a}{L_0^2}=\df{3(3\tilde u \tilde v)^{1/3}}{L_0^2}\,,\\
\tilde{U}_\text{VI$_0$}&= \df{2b^2}{L_0^2 a}=\df{(3 \sigma^8 \tilde v^5)^{1/3}}{\tilde u L_0^2}\,,\\
\tilde{U}_\text{VII$_h$}
&= \df{a^8+2(6h^2-1) a^4 b^4 +b^8}{2 L_0^2 a^3b^4}\cr
&=\df{(3\tilde u\tilde v)^{1/3}}{2 L_0^2} (6 h^2 -1) +\df{(3\tilde u \tilde v)^{1/3}}{4 L_0^2}\left (16^{-\sqrt{3}}(3 \tilde{v} \sigma^2 /\tilde u)^{-4/\sqrt{3}}+ 16^{\sqrt{3}}(3 \tilde{v} \sigma^2 /\tilde u)^{4/\sqrt{3}}\right )\,,\q\\
\tilde{U}_\text{VIII}&=\df{a^3+4a b^2 }{2L_s^2 b^2}=\df{3^{1/3}\left (\tilde u^3 + 4 \tilde u \tilde v^2 \sigma^4 \right )}{4 L_s^2 \tilde v^{7/3} \sigma^{16/3}}\,,\\
\tilde{U}_\text{IX}&=\df{a^3-4a b^2}{2L_s^2 b^2}=\df{3^{1/3}\left (\tilde u^3 - 4 \tilde u \tilde v^2 \sigma^4 \right )}{4 L_s^2 \tilde v^{7/3} \sigma^{16/3}}	\,,
\ee
where we have used the map \eqref{u v tilde map} (which is different for the two families of internal metrics) to also express these quantities in the conformal null coordinates.

We see that the potentials of the Bianchi III and KS models differ by the relative sign between potential and the kinetic terms, i.e. of the ``mass of the particle moving on the field space''. The potential in these two cases and in the Bianchi V and VI models have a single term, so they can be removed with a change of lapse while keeping the field space metric flat. This can be achieved with the choice of lapse and coordinates
\be
\hspace{-5cm}\text{(III, KS)}&&N&= \f{\cV_0}{2 G a}\,, &&
\left\{\begin{array}{rlcrl}
u&\!\!\!\!=- 2 a^2 b\,,		& &\hspace{1.4cm} p_u	&\!\!\!\!=-\df{p_a}{4 a b}\,,\\[9pt]
v&\!\!\!\!= 2 b	\,,		& & p_v 				&\!\!\!\!= \df{2 b p_b+a p_a }{4b}\,,\\
\end{array}\right.\\
(\rm{V})&&N&= \f{\cV_0}{6 G a}\,,&&
\left\{\begin{array}{rlcrl}
u&\!\!\!\!= 3 a^{2+\f{2}{\sqrt{3}}}	b^{-\f{2}{\sqrt{3}}}\,,			
					& &\hspace{0.4cm} p_u	&\!\!\!\!=\left(\df{a}{b}\right)^{-2/\sqrt{3}}\df{ap_a+(1-\sqrt{3})bp_b}{12 a^2}\,,\\[9pt]
v&\!\!\!\!=3 a^{2-\f{2}{\sqrt{3}}}b^{\f{2}{\sqrt{3}}}\,,
					& & p_v 				&\!\!\!\!= \left (\df{a}{b}\right)^{2/\sqrt{3}}\df{ap_a+(1+\sqrt{3})bp_b}{12 a^2}\,,\\
\end{array}\right.\\
(\rm{VI})&&N&= \f{\cV_0 a}{2 G b^2} \,,&&
\left\{\begin{array}{rlcrl}
u&\!\!\!\!= 2 \log a^2 b\,,	& &\hspace{1.1cm} p_u 	&\!\!\!\!=\df{a p_a}{2}\,,\\[9pt]
v&\!\!\!\!= b^4	\,,		& &p_v 				&\!\!\!\!=  \df{2 b p_b-a p_a}{8b^4}\,.\\
\end{array}\right.
\ee
We can then use these changes of coordinates in the expression of the generators to obtain the algebra $\cA$ in terms of the scale factors and their momenta.

For the other models it is not possible to obtain a flat field space geometry by choosing the lapse to be the inverse of the potential. However, looking at the potentials for the models VIII and IX, we see that the two monomials are the same and only differ by a sign. Furthermore, they satisfy the condition \eqref{power_condition}. In these two cases we have two possible choices of lapse which lead to the $\sl(2,\R)$ subalgebra of $\cA$. The first choice of lapse and coordinates is
\be
N= \f{\cV_0}{2G a}\,,\q\q u=2 a^2 b\,,\q\q v=2b\,,
\ee
and it leads to the charges
\be
C_\rm{d}=\f{b b_p}{2}\,,\q\q V_3= 8 a^2 b^2 \,,\q\q \cQ_0=\f{p_a (a p_a -2 b p_b)}{16 a b^2}+ \f{\cV_0 a^2}{4  L_s^2 G^2 b^2}\,.
\ee
The second choice of lapse and coordinates is
\be
N= - \f{2\cV_0 b^3}{G a^3}\,,\q\q u=\df{a^4 b^2}{2}\,,\q\q v= \df{1}{2 b^2}\,,
\ee
and it leads to the charges
\be
C_\rm{d}=\f{b b_p}{2}\,,\q\q V_3= \f{a^4}{2}\,,\q\q \cQ_0= -\f{p_a (a p_a -2 b p_b)}{4 a^3}\mp \f{4 \cV_0 b^2}{  L_s^2 G^2 a^2}\,,
\ee
where the $(-)$ sign is for Bianchi VIII and the sign $(+)$ for Bianchi IX. Unfortunately \eqref{power_condition} is not satisfied for the Bianchi models IV and VII, meaning that they do not exhibit the $\sl(2,\R)$ symmetry.

\bibliographystyle{Biblio}
\bibliography{Biblio.bib}

\end{document}